\documentclass[letterpaper]{article} 
\usepackage{aaai24}  
\usepackage{times}  
\usepackage{helvet}  
\usepackage{courier}  
\usepackage[hyphens]{url}  
\usepackage{graphicx} 
\urlstyle{rm} 
\usepackage{natbib}  
\usepackage{caption} 
\frenchspacing  
\setlength{\pdfpagewidth}{8.5in}  
\setlength{\pdfpageheight}{11in}  
%

%
%
\pdfinfo{
/TemplateVersion (2024.1)
}

\setcounter{secnumdepth}{1} 

%





\title{Keeping Up With the Winner!
\\Targeted Advertisement to Communities in Social Networks}
\author {
    Shailaja Mallick\equalcontrib \textsuperscript{\rm 1},
    Vishwaraj Doshi\equalcontrib \textsuperscript{\rm 2},
    Do Young Eun\textsuperscript{\rm 1}
}
\affiliations {
    \textsuperscript{\rm 1}North Carolina State University\\
    \textsuperscript{\rm 2}Data Science \& Advanced Analytics IQVIA Inc.\\
    mallick.shailaja@gmail.com,
    vishwaraj.doshi@iqvia, 
    dyeun@ncsu.edu
    
}

\usepackage{dblfloatfix}
\usepackage{textcomp}
\usepackage{amsmath}

\usepackage{xcolor}
\usepackage[Symbolsmallscale]{upgreek}
\usepackage{enumerate}
\usepackage{comment}
\usepackage[caption=false]{subfig}
\usepackage{etoolbox}
\usepackage[linesnumbered,ruled,vlined]{algorithm2e}
\usepackage{multicol}
\usepackage{amssymb}
\usepackage{amsthm}

\makeatletter
\newcommand{\onetagleft}{\tagsleft@true}
\makeatother

\makeatletter
\newcommand{\onetagright}{\tagsleft@false}
\makeatother

\newcommand{\R}{\mathbb{R}}

\newcommand{\ones}{\mathbf{1}}

\newcommand{\vecnu}{\boldsymbol \nu}
\newcommand{\vecalpha}{\boldsymbol \alpha}

\newtheorem{theorem}{Theorem}[section]

\newtheorem{proposition}[theorem]{Proposition}
\newtheorem{lemma}[theorem]{Lemma}

\newcommand{\mM}{\mathbf{M}}
\newcommand{\mH}{\mathbf{H}}

\newcommand{\mS}{\mathbf{S}}
\newcommand{\mA}{\mathbf{A}}
\newcommand{\mB}{\mathbf{B}}

\newcommand{\vp}{\mathbf{p}}
\newcommand{\0}{\mathbf{0}}

\newcommand{\vx}{\mathbf{x}}
\newcommand{\vy}{\mathbf{y}}

\newcommand{\vz}{\mathbf{z}}

\newcommand{\vq}{\mathbf{q}}

\newcommand{\vecv}{\mathbf{v}}
\newcommand{\vw}{\mathbf{w}}
\newcommand{\vu}{\mathbf{u}}

\newcommand{\cG}{\mathcal{G}}
\newcommand{\cN}{\mathcal{N}}

\newcommand{\cF}{\mathcal{F}}


\begin{document}

\maketitle

\begin{abstract}
When a new product enters a market already dominated by an existing product, will it survive along with this dominant product? Most of the existing works have shown the coexistence of two competing products spreading/being adopted on overlaid graphs with same set of users. However, when it comes to the survival of a weaker product on the same graph, it has been established that the stronger one dominates the market and wipes out the other. This paper makes a step towards narrowing this gap so that a new/weaker product can also survive along with its competitor with a positive market share. Specifically, we identify a locally optimal set of users to induce a community that is targeted with advertisement by the product launching company under a given budget constraint. To this end, we model the system as competing Susceptible-Infected-Susceptible (SIS) epidemics and employ perturbation techniques to quantify and attain a positive market share in a cost-efficient manner. Our extensive simulation results with real-world graph dataset show that with our choice of target users, a new product can establish itself with positive market share, which otherwise would be dominated and eventually wiped out of the competitive market under the same budget constraint.
\end{abstract}


\section{Introduction} \label{Introduction}

\textbf{Motivation:} In this competitive world of rapid technological advancements, one of the key strategic decisions that a company launching a new product has to make is how to market itself \cite{cooper1987new, montoya1994determinants}. It can carve out a niche by differentiating itself from its competitors, either by its product design or brand image \cite{fuchs2010evaluating, jalkala2014brand}, and highlighting them through marketing techniques such as advertisements. In such situations, advertisements serve as a medium of communication and play a key role for the firm to achieve this differentiation. There has been evidence that products with no mass advertisements have gained popularity over time as the companies have focused on targeting advertisements to niche communities \cite{dalgic1994niche}. The effectiveness of targeting a small portion of users (communities) who can further spread and influence their neighborhood network via the word-of-mouth, rather than focusing on the whole population, has proven beneficial and also recognized by businesses in the market \cite{armstrong2003marketing, yang2006mining}. 

For example, Nintendo dominated the video game consoles market in the US for almost a decade until the introduction of the PlayStation (PS) \cite{gallagher2002innovation}. Over the next few years, PS was able to establish and eventually dominate the video game console market, forcing Nintendo to focus on hand-held consoles. Subsequently, Xbox initially entered a PS-dominated video game consoles market, and was able to sustain itself in the market \cite{xboxps}. Similarly, in the aviation industry, Airbus entered and managed to sustain in a Boeing dominated market \cite{irwin2004airbus}. Figure \ref{fig:casestudy} shows how before the introduction of Xbox 360, PS 2 already dominated the US market and since the emergence of Xbox, it took some time to mark its presence in the market against its competitor (similar for iPhone 4S and Galaxy S4)\footnote{Real-data collected from Google Trends for the United States region which provides ``insights into search pattern''(\url{https://trends.google.com/trends/?geo=US}).}. 

\begin{figure}[!t]
    \centering   
    \includegraphics[scale = 0.17]{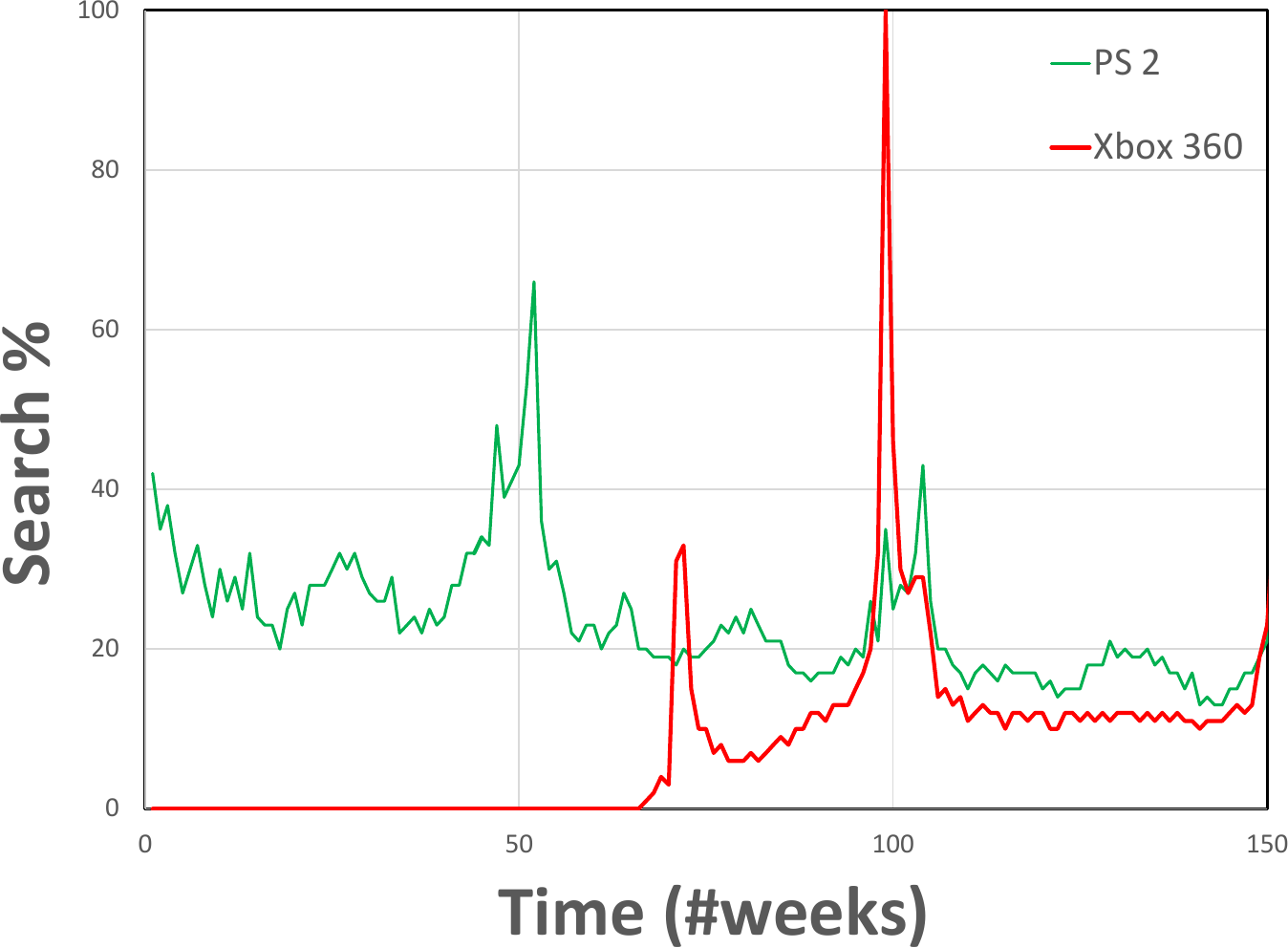} 
    \hfil
    \includegraphics[scale = 0.17]{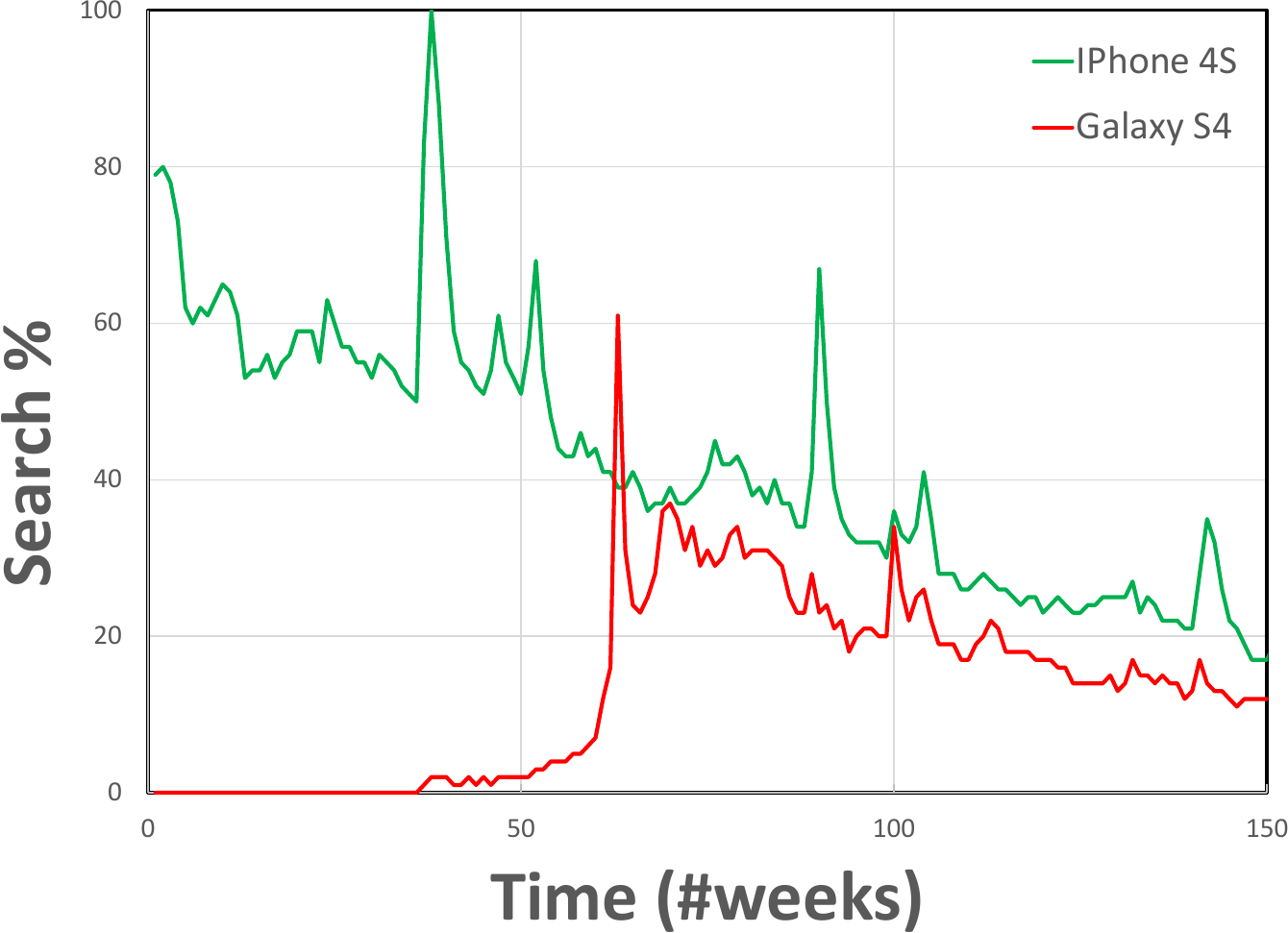}
    \caption{Real web-search percentage over a time period showing the introduction of a new product (Galaxy S4 \& Xbox 360) alongside an existing product (iPhone 4S \& PlayStation 2).}\label{fig:casestudy}
\end{figure} 

Note that, in the above examples, the new product always enters a monopoly, i.e., market dominated by a single competitor, and manages to survive. Hence, given these motivating examples, we focus on the following scenario: any given market is dominated by a single product and a competing product strategizes (i.e., advertising to niche communities) to survive in that market. Specifically, our mathematical model captures the early phase of market scenario when there is only one product in the market and a `new' product tries to enter this monopoly.

When it comes to modeling product competition where individuals are the adopters, the interactions between them can be modeled as edges of a graph, which are linked via socially meaningful relationships like friendship or information exchange. Such social networks have become the paradigm for studying the spread of competing products and their adoption among individuals \cite{kalish1985new}. Although here we stress on `product' adoption and its influence among neighbors, we can draw a similar analogy within different contexts, for example, having an opinion (or preference) towards a political party/idea/information spreading in a network and how targeted advertisement plays an influential role in opinion formation/spreading mechanism. Given these motivating scenarios, we focus on settings where a new product enters a market dominated by a competitor. 

\textbf{Choice of Modeling Approach - bi-SIS:}  When two products compete with each other on a network, the eventual outcomes can be one of the following: absolute dominance, where one of the products dominates by wiping out its competitor and is adopted widely; die-out, where none of the products are adopted; or coexistence, where both of the products are adopted within the network \cite{prakash2012winner, beutel2012interacting, sahneh2014competitive, wei2016cooperative}. Various mathematical models, such opinion dynamics-based models (i.e., analysis of how opinions form/spread in a population) such as De-Groot, Hegselmann-Krause, Friedkin-Johnsen, variants of epidemic models \cite{omic2009epidemic, liu2019analysis}, and others \cite{proskurnikov2017tutorial, proskurnikov2018tutorial}, have been proposed to capture the dynamics of competing products. Most of the works around classical opinion dynamics-based models focus on showing eventual consensus (everyone agrees to one of the opinions/rumors eventually), which has also been extended to capture the scenario of coexistence of two opinions (or clusters of opposing opinions). These works have shown coexistence by either relying on external factors in the form of stubborn agents (initially or currently forming bias towards an opinion), zealots (who never alter their initial stance) or via the inclusion of confidence bounds beyond which individuals do not interact~\cite{gargiulo2008saturation,yang2012convergence, dandekar2013biased, mobilia2015nonlinear, amelkin2017polar, mukhopadhyay2020voter, anagnostopoulos2022biased, nguyen2020dynamics}. The addition of stubborn agents already guarantees that at least one of the opinions (the one followed by the stubborn agents) is always adopted and hence results in clusters of opinions. With the inclusion of these additional factors in the model setups, eventual consensus, or complete wipe-out is infeasible. 

The propagation of information, influence maximization, idea, or opinion/rumor diffusion in a network has also been extensively studied via Linear Threshold (LT)/ Information Cascade (IC) model~\cite{kempe2003maximizing, chen2010scalable}. In these models, individuals can be either in an active (i.e., influenced) or an inactive (i.e., free of influence) state. Here, the overarching assumption is that individuals either never change states or can only switch between active or inactive states randomly~\cite{granovetter1978threshold, pathak2010generalized}, with all nodes (users) becoming activated eventually. This is not always true in the context of products/opinions, as people tend to change their preferences/opinions/views from time to time. 

The Susceptible-Infected-Susceptible (SIS) epidemic model has been used to study the spread of viruses in a network of populations \cite{lajmanovich1976deterministic, van2008virus}. However, the \textit{bi-SIS} model, a variant of the SIS model (single virus), has been used to characterize the spreading dynamics of two competing products/opinions in a network \cite{prakash2012winner, sahneh2014competitive}. In such a model, via word-of-mouth or by being influenced by their neighbors, users adopt one of the two competing products (get infected) and gradually recover to become susceptible to infection again. Unlike the opinion dynamics-based model or LT/IC model, the bi-SIS model does not necessitate that the individuals have an opinion, positive or negative, ex-ante; in contrast, it allows individuals to switch states from being susceptible (inactive) to an infected (active) state and vice-versa. It has proved to be effective in capturing eventual consensus or coexistence in any general arbitrary graph, including special graphs such as complete or random graphs~\cite{wang2003epidemic, li2012susceptible, prakash2012winner, santos2015bi, yang2017bi}. It can be seen that the bi-SIS model aptly captures competitiveness without the need to have any prior assumptions or dependence on a specific network structure. Thus, the bi-SIS model captures all the three possible state spaces along with providing conditions for each of the outcomes i.e., die-out, absolute dominance, and coexistence. 

Although the bi-SIS model is effective in capturing all the possible outcomes, most of the existing works only focus on the stability analysis of the equilibrium points or identifying which of the three possible outcomes the system converges to \cite{van2008virus, li2012susceptible, prakash2012winner, sahneh2014competitive, doshi2021competing}. Furthermore, none of these works explicitly quantify the magnitude of existence; for example, the market share of either one or both of the competing products, especially in a general graph setting. Note that with respect to opinion dynamics, for example, the simple linear French-De-Groot model with stubborn agents \cite{french1956formal, proskurnikov2017tutorial} has shown quantifying results on a general graph. However, as mentioned earlier, unlike the bi-SIS model, the opinion dynamics-based models do not capture all the three possible state space and thus pose a limitation when it comes to the analysis of the market share of competing products. 

\textbf{Our Contributions:} In this paper, we take a step forward and propose techniques to quantify and improve the market share of the new/weaker product when it competes with an existing dominant product in the market via bi-SIS epidemic model. Under limited budget constraints, this new company launching its product would aim for survival, and the best strategy in this situation is to target certain users via advertisement/promotional offers to form a small support users' group (community) who, with their positive reviews, can spread about this new product. The motivation behind promoting such communities (additional edges among the users of the graph) is to increase the rate of influence so as to maximize the chance of its survival. Specifically, a resource-constrained company would want to identify an optimal set of individuals to form such communities to maximize its (positive) market share as much as possible. 

When it comes to quantifying the resulting market share on a general graph, there is no known closed-form for the positive fixed point (market share) even for a single-SIS model. The situation is further compounded when the competition comes into play (bi-SIS). To tackle this challenge, we employ the perturbation approach to optimally assign the resources under a specific budget structure based on model parameters and approximate the resulting market share of this original losing/weaker product. Our proposed budget structure also accounts for a minimum budget below which the company, despite spending positive amount of budget on advertisement expecting a non-zero return on investment, still gets zero market share and thus cannot compete with its competitor. By leveraging our theoretical findings, we develop a heuristic approach which returns a locally optimal choice of users, under a total budget constraint with heterogeneous costs for different users. We also demonstrate the efficacy of our approach by comparing against different benchmarks such as different standard centrality measures, over real network topology with various cost constraints for the users in the community. 


\section{System Model and Problem Setup}\label{section2}

In this section, we take an initial look at the problem of optimally targeting individual users and recruiting them to participate in community, with the goal of maximizing the long term market share of the new product. It has been shown that when two products compete on a `same' graph\footnote{Same graph is analogous to one social network platform, for example, Facebook.} with same set of users, at steady state, the stronger one dominates and wipes-out the weaker one \cite{prakash2012winner}. WLOG, in our setting, on a `same' graph, \textit{Product 1} is the \textit{dominant} product in the market and \textit{Product 2} is the \textit{new/weaker} product (with zero market share) trying to compete against Product 1 for a positive market share. In our context, a community or online discussion forum provides a platform where a group of interacting users actively exchange information about common interests. Few pertinent examples can be a subreddit -- a community in Reddit~\cite{reddit} or influencer pods in social media platforms. In a subreddit, members can post information which is accessible to all members of the group, irrespective of whether members follow each other or not. In the case of influencer pods, groups of influencers work together to boost each other’s engagement (e.g., product's marketing)~\cite{grin}. We focus on the problem of \textit{creating} one such community. We assume that any interactions or discussions between users in this newly formed community are solely about the new company's product, and the influence among the users of the community is uniform/homogeneous.

\begin{figure}[!t]
    \centering
    \includegraphics[scale = 0.5]{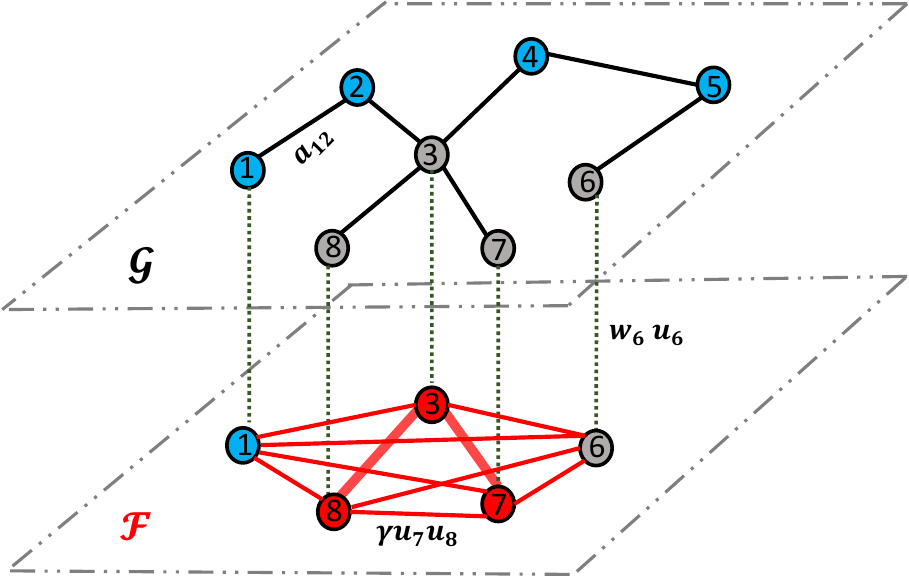}
    \caption{Illustration of users forming \textit{community} (Nodes 1,3,6-8 in layer $\cF$). The blue (red) nodes denote users adopting Product 1 (Product 2) at any given instance, and gray nodes are the users who do not own any of the products.}\label{fig:community figure}
\end{figure} 
Suppose that any user whom the company has targeted in the form of advertisement/content-marketing chooses to engage in the community-based discussions with a certain probability. We can safely assume that the more aggressive the company is with its advertising efforts targeted towards the user, the higher is the probability of the user participating in this created community. In real life, these targeted advertisement efforts take various forms, such as product placement, early-bird discounts, or promotional offers, and cost a certain amount of resources. Later in this section, we give a more detailed discussion on the modelling aspects of the cost structure for our problem. 

\subsection{The System Model}

Let $u_i$ denote the probability with which any user $i \in \cN$ actively engages in such a community. We assume that all users participating in the community are able to communicate with one another, and Product 2's influence spreads across this community with rate $\gamma\beta_2$ (where $\gamma$ is a factor capturing the rate of influence relative to $\beta_2$). To clearly visualize this creation of a community, we draw it as a separate layer, as shown in Figure \ref{fig:community figure}. The top layer $\cG$ denotes the original graph where the mutual follower relationships are according to adjacency matrix $\mA$, and layer $\cF$ denotes the created community, in which a user $i\in\cN$ participates with probability $u_i$. The spread of influence within this community, is akin to adding an edge of weight $\gamma$ for any two users $i$ and $j$ \textit{known} to be actively participating in the community. Specifically, the parameter $\gamma$ denotes the amount the company is willing to spend on the members of the community to improve their communication efficiency (user’s interaction). If the two users were already connected prior to being a part of the community, that is, $a_{ij} = 1$ ($a_{12}$ in Figure \ref{fig:community figure}), then $\gamma$ is simply an additional reinforcement of their connection as a consequence of them interacting through multiple media. The thick `red' lines in Figure \ref{fig:community figure} highlight this reinforcement whereas the `black' lines indicate the original interaction links. Specifically, the introduction of additional edge weight (as shown in layer $\mathcal{F}$ in Figure \ref{fig:community figure}) is to induce an artificial community-like structure (clique) where the edge weight denotes how much the users are influenced from other users through their posts/shared information. The probability with which both $i$ and $j$ engage in the forum thus becomes $u_iu_j$ ($u_7 u_8$ in Figure \ref{fig:community figure}), and the total weight of edges connecting them is, on average, given by $a_{ij} + \gamma u_i u_j$. Table \ref{table:symbols} explains the terminology we have used in this paper. We denote vectors ($\vecv \!\in\! \R^{N}$) and matrices ($\mH \in \R^{N \times N}$) as bold lowercase and uppercase letters, respectively. Note that in our model we assume that the Product 1 does not advertise to counter this `incursion' by Product 2 and a user cannot adopt both products at the same time. We also assume that the complete network structure information is available to the company before launching the new product. 

\renewcommand{\arraystretch}{1.1}
\begin{table}[]
\resizebox{\columnwidth}{!}{%
\begin{tabular}{|l|l|}
\hline
\multicolumn{1}{|c|}{\textbf{Symbol}}        & \multicolumn{1}{c|}{\textbf{Definition/Description}}              \\ \hline 
$\mathcal{G(N,E)}$                          & Undirected, connected graph                                      \\ \hline
$\cN = \{1,2,\dots, N\}$                     & Set of nodes                                                      \\ \hline 
$\mA \triangleq [a_{ij}]$                    & Adjacency matrix of graph $\cG$                                   \\ \hline 
$\beta_1$ ($\beta_2$)                        & Adoption rate of product 1 (or product 2)                         \\ \hline 
$\delta_1$ ($\delta_2$)                      & Disown rate of product 1 (or product 2)                           \\ \hline 
$\tau_1 = \beta_1/\delta_1$ ($\tau_2$)       & Effective adoption rate of product 1 (or product 2)               \\ \hline 
$u_i$                                        & Probability with which any user $I$ actively engages in community \\ \hline 
$w_i$                                        & Cost of recruiting user $i$ in community                          \\ \hline 
$\gamma$                                     & Influence rate among community users (uniform)                             \\ \hline 
$C$                                          & Budget allocated by company launching new product (product 2)     \\ \hline 
$\lambda = \lambda(\mH) > 0$                 & Largest eigenvalue of matrix $\mH$                                \\ \hline 
$\vx^* (\vy^*) \gg \0$                       & Positive fixed points corresponding to single-SIS                 \\ \hline 
$\mS_{\vecv} \triangleq diag(\ones - \vecv)$ & Diagonal matrix with elements $(1 - v_i)$                         \\ \hline 
\end{tabular}%
}
\caption{Symbols}
\label{table:symbols}
\end{table}

As mentioned earlier in Section \ref{Introduction}, we use the bi-SIS ODE system to model our system dynamics, which yields the following
\begin{equation}
  \begin{aligned}\label{eq:ode_gammaUU}
     \dot x_i(t) &= \beta_1 (1\!-\!x_i(t)\!-\!y_i(t))\sum_{j \in \cN} a_{ij} x_j(t) - \delta_1 x_i(t) \\
     \dot y_i(t) &= \beta_2 (1\!-\!x_i(t)\!-\!y_i(t))\sum_{j \in \cN}[ a_{ij} +\gamma u_i u_j ]y_j(t) \\
     &- \delta_2 y_i(t),
  \end{aligned}
\end{equation}
where $x_i(t)$ ( $y_i(t)$ ) is the probability with which user $i$ owns Product 1 (Product 2) at time $t \geq 0$. Also, $\beta_2(a_{ij} + \gamma u_i u_j)$ is the overall expected rate at which the user $j$, say who already owns Product 2, causes the susceptible (owns neither products) user $i$ to adopt Product 2. Note that \eqref{eq:ode_gammaUU} is an extension of the original bi-SIS model \cite{prakash2012winner} whose system dynamics for any two competing products propagating on a graph with adjacency matrix $\mA$ is given by

\begin{align*}
    \dot x_i(t) &= \beta_1 \big(1- x_i(t) - y_{i}(t)\big) \sum_{j \in \cN} a_{ij} x_{j}(t) - \delta_1 x_{i}(t)\\
    \dot y_i(t) &= \beta_2 \big(1- x_{i}(t) - y_{i}(t)\big) \sum_{j \in \cN} a_{ij} y_{j}(t) - \delta_2 y_{i}(t).
\end{align*}

It has been established that the above system dynamics results in only the stronger product to dominate the market and wipe out the weaker one. We, in this paper, try to bridge this gap by inducing a community ($\gamma u_i u_j$ in \eqref{eq:ode_gammaUU}) within the original graph so that the weaker product can survive (i.e., a positive market share) in the market.

Using $\vu = [u_i]_{i \in \cN}$ to denote the vector of probabilities with which users participate in the community, we can write \eqref{eq:ode_gammaUU} in a matrix-vector form as follows:
\begin{equation}
  \begin{aligned}\label{eq:ode_gammaUU-vector}
     \dot \vx(t) &= \beta_1 \text{diag}(\ones \!-\!\vx(t)\!-\!\vy(t))\mA\vx(t) - \delta_1 \vx(t) \\
     \dot \vy(t) &= \beta_2 \text{diag}(\ones\!-\!\vx(t)\!-\!\vy(t))\left[ \mA+\gamma \vu \vu^T \right]\vy(t) \\
     & -\delta_2 \vy(t).
  \end{aligned}
\end{equation}
As can be seen in \eqref{eq:ode_gammaUU-vector}, the extra edge weights added to the original adjacency matrix $\mA$, as a result of creating community in the manner as described earlier, are in the form of a rank one update given by $\mA + \gamma\vu\vu^T$.\footnote{$\gamma\vu\vu^T$ leads to a clique with homogeneous influence rate among the users. Note that heterogeneous influence rate (while ensuring more generality) introduces further complications in the analysis, which we leave as a future work.} Later in Section \ref{section3}, we use this special form of the update to aid our analysis. Our propagation model \eqref{eq:ode_gammaUU-vector} yields the equilibrium equations
\begin{align}
    \vx &= \tau_1\text{diag}(\ones - \vx - \vy)\mA \vx   \label{eq:fpe x}\\
    \vy &= \tau_2\text{diag}(\ones - \vx - \vy)\left[\mA + \gamma \vu \vu^T \right]\vy,   \label{eq:fpe y}
\end{align}
where $\vx$ and $\vy$ are the fixed points of~\eqref{eq:ode_gammaUU-vector} for Product 1 and Product 2, respectively.

\subsection{Problem Setup}
We present the problem of maximizing the market share $\bar{y} = 1/N \sum_{i \in \cN} y_i$ (where $\vy = [y_i]$ is the fixed point for Product 2 from \eqref{eq:fpe y}) under a limited budget constraint, say $C$, by the optimization problem given by

\onetagleft
\begin{align}\tag{$\mathcal{M}$}\label{op:maximize y}\hspace{2em}
\underset{\gamma, \vu}{\text{maximize}}\ \ \ \  &\bar y = \frac{1}{N}\!\sum \limits_{i \in \cN} y_i \nonumber\\ 
\text{subject to}\ \ \  \ &\sqrt{\gamma}\sum \limits_{i \in \cN} w_i u_i \leq C \nonumber\\
                          &u_i \in [0,1] \ \ \ \ \forall i\in\cN. \nonumber
\end{align}


We structure the cost constraint in this optimization problem in a heterogeneous manner. Under this setting, $w_i$ is the cost of recruiting user $i$ in a newly created community. This user $i$ will be included in the forum with probability $u_i$ such that the `average cost' of having user $i$ in the forum would be $w_i u_i$ ($w_6 u_6$ in Figure \ref{fig:community figure}). This setting is very general and can capture various scenarios such as $w_i$ being larger (more expensive to recruit) for more popular users/influencers as would be the case in reality. The company also allocates a portion of its budget towards $\gamma$, which captures the overall communication efficiency across the channels established for user interaction, as part of creating the discussion group\footnote{The parameters $\gamma$ and $w_i$ set in this model are part of the budget, which in a practical situation would be determined as a function of the total budget allocated by the company launching the new product.}. We reflect this in the budget constraint in our optimization problem \ref{op:maximize y}, with the presence of the $\sqrt{\gamma}$ term. This ensures that the scaling between $\gamma$ and $\vu$ is the same as $\gamma \vu \vu^T$ as in \eqref{eq:fpe y}. To be more specific, for any choice of $\vu$ up to a multiplicative constant, we can allocate $C$ between $\gamma$ and $\vu$ without affecting the fixed point $\vy$. Our scaling ensures that $\hat \vu = c\vu$ and $\hat \gamma = \gamma/c^2$ for any constant $c$ renders the terms $\gamma \vu \vu^T$ unchanged. Any other scaling would force us to consider every possible value of $\vu$ and $\gamma$ \emph{separately} and thus obfuscate the meaning of the budget constraint for our optimization problem and subsequently our analysis on the optimal choice of $(\gamma, \vu)$ on the maximal market share $\bar{y}$. This completes our problem set up, and we are now ready to address its analysis in the next section. 


\section{Analysis of Product 2's Market share $\bar{y}$}\label{section3}
In this section, we begin by first analyzing the budget constraint in \ref{op:maximize y} and the minimum amount of budget needed to ensure positive market share of Product 2, $\bar{y}>0$. We also establish a key result on the behavior of $\bar{y}$ as the entries $u_i$ of the vector of probabilities $\vu$ (probability with which any user $i \in \cN$ actively engages in the community) are perturbed. We then define an optimization problem that helps in updating $\vu$, consequently helping towards obtaining an improved market share of Product 2 under the same budget constraint. We defer to the Appendix all the technical proofs of the results presented in this section.

\subsection{Minimum Budget for $\bar{y}>0$}\label{sec:budget discussion}

To begin with, we provide the following proposition regarding the budget constraint in \ref{op:maximize y}.

\begin{proposition}\label{prop::budget equality}
Any solution $(\gamma^*,\vu^*)$ to the optimization problem \ref{op:maximize y} always satisfies the budget constraint with equality, that is,
$$\sqrt{\gamma^*}\sum \limits_{i \in \cN} w_i u_i^* = C.$$
\end{proposition}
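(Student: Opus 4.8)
The plan is to argue by contradiction using monotonicity of the market share in the budget: if a purported optimum leaves budget slack, then I can spend the remainder so as to strictly increase $\bar y$, contradicting optimality. Concretely, suppose $(\gamma^*,\vu^*)$ solves \ref{op:maximize y} but $\sqrt{\gamma^*}\sum_{i\in\cN} w_i u_i^* < C$. Since $\bar y$ depends on $(\gamma,\vu)$ only through the perturbed matrix $\mA + \gamma\vu\vu^T$ appearing in the fixed-point equation \eqref{eq:fpe y}, the natural lever is $\gamma$: raising $\gamma$ while holding $\vu^*$ fixed increases the expenditure $\sqrt{\gamma}\sum_i w_i u_i^*$ and enlarges every entry of $\mA + \gamma\vu^*(\vu^*)^T$. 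The goal is to show this necessarily increases $\bar y$, so that some admissible $\gamma' > \gamma^*$ with $\sqrt{\gamma'}\sum_i w_i u_i^* \le C$ yields $\bar y(\gamma',\vu^*) > \bar y(\gamma^*,\vu^*)$.

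First I would set up the comparison: fix $\vu=\vu^*$ and regard $\mB(\gamma)=\mA+\gamma\vu\vu^T$ as an increasing family of nonnegative matrices. The bi-SIS equilibrium $(\vx,\vy)$ solving \eqref{eq:fpe x}--\eqref{eq:fpe y} is the steady state of a competitive, order-preserving system in which strengthening Product 2's spread helps Product 2 and hinders Product 1. I would invoke (or re-derive via a comparison principle for monotone systems) that strengthening $\mB(\gamma)$ drives the endemic equilibrium so that $\vy$ increases componentwise while $\vx$ decreases, whence $\bar y=\tfrac1N\sum_{i\in\cN} y_i$ is nondecreasing in $\gamma$. To obtain a \emph{strict} increase I would differentiate \eqref{eq:fpe x}--\eqref{eq:fpe y} in $\gamma$; the resulting linear system for $d\vy/d\gamma$ is forced to be positive by a Perron--Frobenius-type positivity of the relevant irreducible, substochastically scaled operator, provided the equilibrium is nondegenerate, i.e.\ $\bar y>0$.

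This contradicts optimality in the main case. I would then dispatch the degenerate case $\sum_i w_i u_i^*=0$, which by positivity of the costs $w_i$ forces $\vu^*=\0$: then Product 2 spreads on the bare graph $\mA$ and is wiped out, so $\bar y=0$. As long as some feasible pair attains a positive share (the operating regime of interest, consistent with the minimum-budget threshold discussed in this subsection), such a point strictly beats $\vu^*=\0$ and again contradicts optimality. Hence every optimum must exhaust the budget.

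The main obstacle is precisely the strict monotonicity of $\bar y$ in $\gamma$. Existence and uniqueness of the fixed point are not enough: because the two-product system is competitive rather than cooperative, the comparison must be carried out jointly on the coupled equilibrium $(\vx,\vy)$, and one must rule out a knife-edge in which the extra spreading power granted to Product 2 is entirely absorbed by the displacement of Product 1 without moving $\bar y$. I expect to resolve this through the order structure of the bi-SIS flow together with a first-order perturbation establishing $d\bar y/d\gamma>0$ along the positive-equilibrium branch.
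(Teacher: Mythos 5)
Your overall strategy---assume budget slack at an optimum, then show that spending more strictly increases $\bar y$, contradicting optimality---is exactly the skeleton of the paper's proof, and your identification of strict monotonicity of $\bar y$ as the crux is correct. The gap is that the step you yourself flag as ``the main obstacle'' is the entire content of the proposition, and your proposed certificate for it does not go through as sketched. You propose to differentiate \eqref{eq:fpe x}--\eqref{eq:fpe y} in $\gamma$ and extract positivity of $d\vy/d\gamma$ from a Perron--Frobenius argument; but this presupposes that the coexistence equilibrium is a nondegenerate (hyperbolic) branch on which the implicit function theorem applies, and that the coupled, \emph{competitive} linear system in $(d\vx/d\gamma, d\vy/d\gamma)$ can be solved with the right sign---neither of which is established, and the first of which genuinely fails at the survival threshold, where the positive branch bifurcates from $\vy=\0$ and $\bar y$ is not differentiable in $\gamma$ (this threshold regime is precisely where the paper operates, cf.\ Lemma \ref{lemma::initial_candidate}). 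The paper avoids differentiation entirely: given two parameter pairs with $\gamma u_i u_j \le \hat\gamma\hat u_i\hat u_j$ entrywise (strict for at least one pair), it initializes the flow of the \emph{larger} system at the equilibrium $(\vx,\vy)$ of the smaller one, observes that at $t=0$ the $\vx$-equation gives derivative $\0$ while the $\vy$-equation gives a derivative $>\0$ (the larger matrix acting on the same $\vy$), and then invokes strong monotonicity of the bi-SIS flow with respect to the competitive ordering (the Doshi et al.\ and Smith monotone-systems results cited in the appendix) to propagate this initial advantage to the limit, yielding $\vy \ll \vz$ for the new equilibrium $\vz$. That flow argument is the missing piece your plan would need; the ``order structure'' route you mention in passing is the one that works, not the linearization.

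Two further points. First, your case analysis has a hole: an optimum with slack, $\vu^*\neq\0$, but $\bar y = 0$ (spend below the survival threshold) is covered neither by your differentiation case (which needs $\bar y>0$) nor by your degenerate case (which only treats $\sum_i w_i u_i^*=0$, i.e.\ $\vu^*=\0$). It is dispatched the same way as your $\vu^*=\0$ case---if any feasible pair attains positive share, no zero-share point can be optimal---but as written you never apply that observation there. Second, note that if \emph{no} feasible pair attains positive share (budget $C$ below the minimum budget of the subsection on $\bar y > 0$), the proposition as stated degenerates, since then every feasible point is optimal, slack or not; so the ``operating regime'' assumption you invoke is genuinely load-bearing. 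The paper's own proof carries the same implicit assumption (its strict inequality at $t=0$ also requires $\vy\neq\0$), so this is a shared caveat rather than a defect unique to your attempt.
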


Proposition \ref{prop::budget equality} tells us that the optimal market share of Product 2 is always obtained by spending all the available budget. This follows from the monotonicity property\footnote{Arising from the fact that the bi-SIS system is a \textit{monotone dynamical system} \cite{doshi2021competing}.} of the ODE system \eqref{eq:ode_gammaUU-vector} in that $\bar{y}$ turns out to be an increasing function of $\gamma$ and $\vu$, albeit being nonlinear. See Appendix \ref{appendix:prop budget equality} for details. 

For general bi-SIS models, Product 2 needs to satisfy a threshold type survival condition $\tau_2 \lambda(\mS_{\vx^*}[\mA + \gamma \vu \vu^T]) > 1$, in order to secure a positive market share in the long run\footnote{Both products survive (coexistence equilibria) when $\tau_1 \lambda(\mS_{\vy^*}\mA)\!>\!1$ and $\tau_2 \lambda(\mS_{\vx^*}[\mA + \gamma \vu \vu^T])\!>\!1$, and the system globally converges to $(\vx, \vy)\!\gg\!(\boldsymbol \0, \boldsymbol \0)$. Here, $\vx, \vy$ correspond to coexistence fixed points \cite{sahneh2014competitive, doshi2021competing}.}. This condition captures the relationship between the initial market share of Product 1, given by $\bar{x}^* = 1/N \sum_{i \in \cN} x^*_i$, and the budget $C$, which is a function of the pair $(\gamma, \vu)$. Specifically, for a given $\tau_2$ and any $\vu$ fixed up to a multiplicative constant, the survival condition indicates that when $\bar{x}^*$ is large, the term $\gamma \vu \vu^T$ also needs to be large (to ensure a large enough eigenvalue)\footnote{This follows from the fact that the spectral radius of a non-negative matrix is a non-decreasing function of its elements \cite{meyer2000matrix} and the terms in $\mS_{\vx^*} = \text{diag}(\ones - \vx^*)$ is decreasing in $\vx^*$.} in order for Product 2 to survive in the long term, subsequently needing a large budget to do so. This implies that as the market share of Product 1 increases, the minimum budget required for Product 2 to survive also increases, which can also be seen from the heat-map as depicted\footnote{We numerically simulate the ODE system \eqref{eq:ode_gammaUU-vector} for a small graph of dolphins' population (62 nodes, 159 edges) \cite{lusseau2003bottlenose}.} in Figure \ref{fig:heatmap}. For an arbitrarily chosen $\vu$, fixed up to a multiplicative constant, Figure \ref{fig:heatmap} illustrates how the steady-state market share of Product 2 varies with $C$ and $\vx^*$. The `red' curve in Figure \ref{fig:heatmap} indicates the minimum budget required to attain $\bar{y} > 0$, and any amount spent below the minimum budget leads to a zero market share. In other words, a positive budget spent by the company towards its product placement does not necessarily guarantee even a small non-zero return; the company launching a new product must account for the existing competitor's market share to decide the minimum budget to ensure its non-zero long-term market share. 

\begin{figure}[!t]
    \centering
    \includegraphics[scale=0.55]{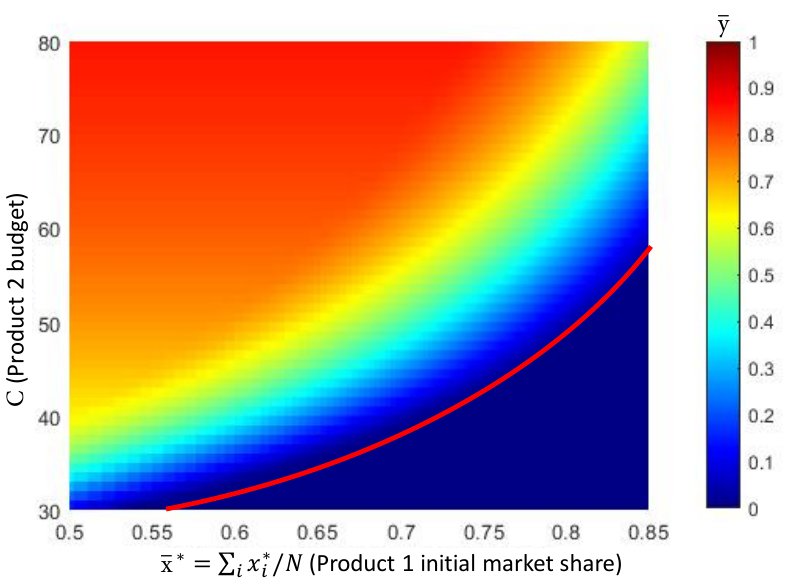}
    \caption{Heat map illustration of the market share as a function of budget $C$ and $\bar{x}^*$.}
    \label{fig:heatmap}
\end{figure}
\subsection{Main Result}\label{subsec:sensitivity ybar}

We now turn our attention towards establishing a key result on the sensitivity of $\bar{y}$, with respect to entries $u_i$ of $\vu$. For a general graph with adjacency matrix $\mA$, it is impossible to obtain the optimal solution to \ref{op:maximize y} in a closed form; the main issue being the absence of a closed-form expression of $\bar{y}$. Even for the single-SIS model in a general graph setting, there is no known closed-form for the positive fixed point. The situation is further compounded when competition comes into play in the bi-SIS model. Therefore, to gain better insight into $\bar{y}$ and how it would change with model parameters, we begin by applying small perturbation to $u_i$ for each $i \in \cN$. As previously mentioned, change in $\gamma$ does not affect the fixed point in \eqref{eq:fpe x}, \eqref{eq:fpe y} when the budget is kept constant, since the term $\gamma \vu \vu^T$ remains the same under the aforementioned scaling for $\gamma$ and $\vu$ with the same budget $C$. The only way to obtain an increase in $\bar{y}$ is by \emph{heterogeneously} changing the entries $u_i$, and thereby obtaining an $\emph{updated}$ version which is no longer a multiplicative constant of the initial $\vu$.

To proceed, we first define the terms \textit{critical curve} and \textit{critical parameters}, which we use throughout the rest of this section. The `critical curve' is the set of parameters at which the aforementioned survival condition is satisfied with equality, and we call any such set of parameters `critical' if they are on the critical curve. Let $(\gamma^c,\vu^c)$ denote a pair of such critical parameters for which $\tau_2 \lambda(\mS_{\vx^*}[\mA + \gamma^c \vu^c {\vu^c}^T]) = 1$. Lemma \ref{lemma::initial_candidate} provides a feasible `initial' pair of critical parameters.

\begin{lemma} \label{lemma::initial_candidate}
Let $\vecv \in \R^N$ be the PF eigenvector of \hspace{0.2em}$\mS_{\vx^*}\mA$ associated with $\lambda(\mS_{\vx^*}\mA)$. Then, for the choice of $\vu^c$ and $\gamma^c$ such that
$$\vu^c \propto \mS_{\vx^*}^{-1} \vecv, ~~\text{  and  }~~ \gamma^c = \frac{1/\tau_2  - \lambda(\mS_{\vx^*}\mA)}{{{\vu^c}^T}\mS_{\vx^*}\vu^c},$$we have $\tau_2 \lambda (\mS_{\vx^*} [\mA+ \gamma^c \vu^c \vu^{cT}])=1$.
\end{lemma}

For any given such pair, let $\vecnu^c = [\nu^c_i]$ be the PF\footnote{The Perron-Frobenius (PF) eigenvector is the eigenvector corresponding to the largest eigenvalue of a non-negative, irreducible matrix \cite{meyer2000matrix}.} eigenvector of $\mS_{\vx^*}[\mA + \gamma^c \vu^c {\vu^c}^T]$. From the threshold condition we know that $\bar{y} = 0$ when $\tau_2 \lambda(\mS_{\vx^*}[\mA + \gamma^c \vu^c {\vu^c}^T]) = 1$. As a first step towards obtaining a positive $\bar{y}$, we apply perturbation $\alpha_i >0$ ($\vecalpha = [\alpha_i]$) to any $i$'th entry $u_i^c$ of $\vu^c$. The updated entry is then given by $u^c_i + \alpha_i$, and we have the following result.

\begin{theorem}\label{theorem::avg_y}
The market share of Product 2 as a function of the perturbation $\alpha_i >0$ to $u_i^c$, can be written as
\begin{equation}\label{eq: them avg y}
 \bar{y}(\alpha_i) = \zeta \nu_i^c \alpha_i + O(\alpha_i^2 ),
\end{equation}
where $\zeta >0$ is a constant independent of $i$.
\end{theorem}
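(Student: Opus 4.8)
The plan is to treat this as a bifurcation-from-threshold computation. Since $(\gamma^c,\vu^c)$ sits exactly on the critical curve, at $\alpha_i=0$ the only nonnegative fixed point is $(\vx^*,\0)$, and a strictly positive branch $\vy\gg\0$ emerges as we cross the threshold. Perturbing only the $i$-th coordinate, I would write $\vu=\vu^c+\alpha_i\ve_i$ with $\gamma=\gamma^c$ held fixed, so that the community matrix becomes $\mH(\alpha_i)\triangleq\mA+\gamma^c\vu\vu^T=\mH^c+\Delta\mH$, where $\mH^c\triangleq\mA+\gamma^c\vu^c(\vu^c)^T$ and
$$\Delta\mH=\gamma^c\alpha_i\left(\ve_i(\vu^c)^T+\vu^c\ve_i^T\right)+\gamma^c\alpha_i^2\,\ve_i\ve_i^T.$$
I expect the new positive fixed point to be small, of the same order as the distance to threshold, so I posit the ansatz $\vy=s\,\vecnu^c+O(s^2)$, where the amplitude $s>0$ is to be determined and $\vecnu^c$ spans the kernel of $\eye-\tau_2\mS_{\vx^*}\mH^c$ (because $\tau_2\lambda(\mS_{\vx^*}\mH^c)=1$ on the critical curve). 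The goal is to show $s=(\text{const.\ indep.\ of }i)\cdot\nu_i^c\alpha_i+O(\alpha_i^2)$, after which $\bar y=\tfrac1N\ones^T\vy$ yields the claimed form.

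First I would slave $\vx$ to $\vy$. Linearizing \eqref{eq:fpe x} about $(\vx^*,\0)$ shows that the deviation $\Delta\vx\triangleq\vx-\vx^*$ solves a linear system whose right-hand side is proportional to $\vy$ and whose governing matrix is the (invertible) Jacobian of the single-SIS map at its stable fixed point $\vx^*$; hence $\Delta\vx$ is linear in $\vy$ and so $\Delta\vx=O(s)$. Substituting $\vx=\vx^*+\Delta\vx$ into \eqref{eq:fpe y} and isolating the linear term gives
$$(\eye-\tau_2\mS_{\vx^*}\mH^c)\,\vy=\tau_2\mS_{\vx^*}\Delta\mH\,\vy-\tau_2\,\text{diag}(\Delta\vx+\vy)\mH^c\vy+\text{(h.o.t.)}.$$
The left operator is singular, so I would apply the Fredholm (Lyapunov--Schmidt) solvability condition and project onto the left null vector. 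The crucial structural fact is that, because $\mH^c$ is symmetric and $\mS_{\vx^*}$ is diagonal and positive, the left PF eigenvector can be taken as $\boldsymbol\ell=\mS_{\vx^*}^{-1}\vecnu^c$, which gives the clean identity $\boldsymbol\ell^T\mS_{\vx^*}=(\vecnu^c)^T$. This identity is what converts the projection of the rank-two perturbation into the target coordinate $\nu_i^c$.

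Carrying out the projection $\boldsymbol\ell^T(\cdots)$ with the ansatz inserted, the left-hand side vanishes and the right-hand side balances at orders $\alpha_i s$ and $s^2$. Using $\boldsymbol\ell^T\mS_{\vx^*}=(\vecnu^c)^T$ and the rank-two form of $\Delta\mH$, the first term becomes $2\tau_2\gamma^c\big((\vu^c)^T\vecnu^c\big)\,\nu_i^c\,\alpha_i s+O(\alpha_i^2 s)$, while the quadratic term $\text{diag}(\Delta\vx+\vy)\mH^c\vy$ contributes a coefficient $C_2\,s^2$ with $C_2$ depending only on $\vecnu^c,\mH^c,\vx^*$ and not on $i$. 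Dividing the solvability relation by $s$ yields $C_1\alpha_i+C_2 s+O(\alpha_i^2)=0$ with $C_1=2\tau_2\gamma^c\big((\vu^c)^T\vecnu^c\big)\nu_i^c$, so $s=-(C_1/C_2)\alpha_i+O(\alpha_i^2)$. Since $C_1$ carries the factor $\nu_i^c$ and $C_2$ is $i$-independent, $s=\kappa\,\nu_i^c\alpha_i+O(\alpha_i^2)$ for a constant $\kappa$ independent of $i$, whence $\bar y(\alpha_i)=\tfrac1N\ones^T\vy=\big(\tfrac{\kappa}{N}\ones^T\vecnu^c\big)\nu_i^c\alpha_i+O(\alpha_i^2)$, which is \eqref{eq: them avg y} with $\zeta=\tfrac{\kappa}{N}\ones^T\vecnu^c$, manifestly independent of $i$.

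The hard part will be pinning down the sign, i.e.\ establishing $\zeta>0$. Since $\ones^T\vecnu^c>0$ and $C_1>0$, positivity of $\zeta$ reduces to $C_2<0$, i.e.\ that the transcritical bifurcation opens into the positive orthant. Rather than computing $C_2$ explicitly, I would argue the sign indirectly: a first-order eigenvalue perturbation gives $\tau_2\lambda(\mS_{\vx^*}\mH(\alpha_i))=1+\tfrac{2\gamma^c((\vu^c)^T\vecnu^c)}{(\vecnu^c)^T\mS_{\vx^*}^{-1}\vecnu^c}\,\tau_2\,\nu_i^c\alpha_i+O(\alpha_i^2)>1$ for $\alpha_i>0$, so the survival threshold is strictly exceeded and the coexistence fixed point satisfies $\vy\gg\0$, whence $\bar y(\alpha_i)>0$; combined with the monotonicity of $\bar y$ in the entries of $\vu$ (the footnote to Proposition \ref{prop::budget equality}), the leading coefficient $\zeta\nu_i^c$ is nonnegative, and strictly positive because the eigenvalue strictly increases. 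A secondary technical point to verify is the invertibility of the single-SIS Jacobian used to slave $\vx$, which follows from the known stability of $\vx^*$ for the single-SIS dynamics.
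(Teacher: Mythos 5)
Your proposal is correct in substance but takes a genuinely different route from the paper. The paper argues in two steps: first, its Lemma \ref{lemma::avg_y-old} expands the fixed point $\vy$ of \eqref{eq:fpe y} in the eigenbasis of $\mS_{\vx^*}[\mA+\gamma\vu\vu^T]$ (orthogonal under the $\mS_{\vx^*}^{-1}$-weighted inner product, via exactly the symmetrization $\mS_{\vx^*}^{1/2}[\mA+\gamma\vu\vu^T]\mS_{\vx^*}^{1/2}$ that you use to identify the left null vector $\boldsymbol\ell=\mS_{\vx^*}^{-1}\vecnu^c$), and matches orders of the expansion coefficients to show $\bar y(\gamma,\vu)=\mathrm{const}\cdot\left(\lambda(\gamma,\vu)-1/\tau_2\right)+O\left(\left(\lambda(\gamma,\vu)-1/\tau_2\right)^2\right)$ with an explicit constant; second, it applies first-order eigenvalue perturbation theory to get $\lambda(\gamma^c,\vu)-1/\tau_2=2\gamma^c\left((\vu^c)^T\vecnu^c\right)\nu_i^c\alpha_i+O(\alpha_i^2)$, which is literally your $C_1$ computation (eigenvalue perturbation \emph{is} the projection of the rank-two perturbation onto the left/right PF eigenvectors). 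Your Lyapunov--Schmidt reduction fuses these two steps into the single solvability relation $C_1\alpha_i s+C_2s^2+\cdots=0$. What the paper's route buys is an explicit, manifestly positive coefficient (the constant in Lemma \ref{lemma::avg_y-old} is a ratio of positive quantities), so $\zeta>0$ is immediate, and modularity: once $\bar y$ is written as a function of the eigenvalue gap, any perturbation direction (in $\gamma$ or in several $u_i$'s at once) is handled by one more eigenvalue-perturbation computation. What your route buys is a more standard, self-contained bifurcation argument, and notably an explicit treatment of the back-reaction of Product 1: you slave $\Delta\vx$ to $\vy$ and let it enter $C_2$ at the same order as $\mathrm{diag}(\vy)\mH^c\vy$, whereas the paper's derivation of Lemma \ref{lemma::avg_y-old} interchanges $\vx$ and $\vx^*$ rather freely and effectively drops terms of that same order.

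One point needs repair: your sign argument, as stated, only yields $\zeta\geq 0$. Knowing that the survival threshold is strictly exceeded and hence $\bar y(\alpha_i)>0$ for small $\alpha_i>0$ is compatible with $\zeta=0$ and $\bar y=O(\alpha_i^2)$, so ``strictly positive because the eigenvalue strictly increases'' does not close the gap; moreover you divide by $C_2$ without establishing $C_2\neq 0$. Both issues are fixed by one observation: the positive fixed point known to exist for $\alpha_i>0$ tends to $\0$ as $\alpha_i\downarrow 0$ (at criticality the only small nonnegative fixed point has $\vy=\0$), hence for small $\alpha_i$ it must lie on your nontrivial branch; the branch relation $C_1\alpha_i+C_2s+o(\cdot)=0$ with $C_1>0$ and amplitude $s>0$ then forces $C_2<0$, in particular $C_2\neq 0$, and therefore $\zeta=-\frac{C_1}{C_2\,\nu_i^c}\cdot\frac{\ones^T\vecnu^c}{N}>0$. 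Alternatively, compute $C_2$ explicitly; the denominator $\sum_{i}(\nu_i^c)^3/(1-x_i^*)^2$ appearing in the paper's Lemma \ref{lemma::avg_y-old} is, up to sign and normalization, exactly that computation. With this repair your argument is complete, and arguably tidier than the paper's.
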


\noindent Theorem \ref{theorem::avg_y} tells us that the magnitude of increase in $\bar{y}$ due to increase in $u^c_i$ (that is, $\alpha_i$) is roughly proportional to $\nu^c_i$ with an error term\footnote{Note that Theorem \ref{theorem::avg_y} shows the sensitivity of $\bar{y}$ with respect to `one' particular $\alpha_i >0$ for which we expect to see the best possible increase in the market share. Any $\alpha_i < 0$ results in a zero market share $\bar{y} = 0$ since $\tau_2 \lambda(\mS_{\vx^*}[\mA + \gamma^c \vu^c {\vu^c}^T]) \leq 1$.}. This implies that there is a greater reward in terms of market share to invest in nodes with larger $\nu_i^c$. The result from Theorem \ref{theorem::avg_y} helps in obtaining a direction leading to an increase in $\bar{y}$. 

Suppose we have an initial pair of model parameters which satisfy the threshold condition, and thus form a critical pair $(\gamma^c, \vu^c)$. These chosen parameters would then satisfy the budget $C = \sqrt{\gamma^c}\sum_{i \in \cN} w_i u_i^c$. By keeping the budget constant, we intend to move in a direction that obtains the best possible increase in $\bar{y}$ from its initial value of $\bar{y}=0$ at the critical point. We do so by applying a perturbation to $u_i^c$, which needs to have both positive and negative components to ensure that the budget constraint remains binding. Also, Theorem \ref{theorem::avg_y} indicates that the increase in $\bar{y}$ is roughly proportional to $\nu_i^c$. 

We leverage this result and the perturbation approach to formulate another optimization problem \ref{op:maximize alpha}. Since Theorem \ref{theorem::avg_y} holds in a small region around the critical curve, we limit the maximum size of perturbation to $\epsilon > 0$. Specifically, for a given initial $\vu^c$, we define $\tilde u_i \triangleq u_i^c + \alpha_i$, for all $i \in \cN$ where $\alpha_i \in [-\epsilon, \epsilon]$ and the step size $\epsilon$ is a small constant that controls the maximal size of perturbation in $\vu$. We then consider the following optimization problem:

\onetagleft
\begin{align} \tag{$\mathcal{P}(\vecnu,\epsilon,\vu)$} \label{op:maximize alpha}
\hspace{5em} 
 \underset{\alpha_1, \cdots, \alpha_N}{\text{maximize}} \ \ \ \ &\underset{i \in \cN}{\sum}\alpha_i \nu_i  \nonumber\\
\text{subject to} \ \ &\sum_{i \in \cN} w_i \alpha_i = 0, \nonumber \\
\ \ &\alpha_i \in [-\epsilon, \epsilon]. \nonumber 
\end{align}


The solution $\vecalpha$ to \ref{op:maximize alpha} is based on the decreasing order of $(\nu_i/w_i)$ such that it satisfies the given constraints. This solution is essentially a relaxation of the well-known \textit{0-1 Knap-sack} problem \cite{garey1979computers}, which for a given initial critical parameters $(\gamma^c, \vu^c)$ leads to a direction resulting in the best possible increase in $\bar{y}$. The perturbed $\vu$ is the optimal number of users who can be targeted with advertisements of Product 2, and thus results in a positive market share.


\section{Experimental Setup \& Results}\label{section4}

In this section, we present simulation results on real-world graphs under a cost constraint for a given initial ($\gamma, \vu$) pair according to Lemma \ref{lemma::initial_candidate}. We first briefly describe the simulation setup for our experiments, which includes the budget structure and then describe the baseline methods that we use to compare its performance against our approach.

\subsection{Simulation Setup}
We use the following two real-world social network based graph datasets for the simulations, which are undirected, connected graphs from SNAP \cite{snapnets} and Network \cite{nr} repositories.
\begin{enumerate}
    \item Social circles- Facebook: Friends list of users using Facebook App, which contains 4,039 nodes and 88,234 edges \cite{mcauley2012learning}.
    \item Facebook pages of public figures: Nodes which are blue-verified public figures' facebook pages and edges are mutual likes among them. It contains 11,565 nodes and 67,000 edges \cite{feather}.
\end{enumerate}
As Product 1 already exists in the network and being stronger than Product 2, the effective adoption rates are such that initially $\tau_1 \lambda(\mS_{\vy^*}\mA) \!>\!1$ and $\tau_2 \lambda(\mS_{\vx^*}\mA)\!\leq\!1$ (zero market share). Here, $\tau_1 = 0.8, \tau_2 = 0.05$ for Facebook dataset with 4039 nodes and $\tau_1 = 0.4, \tau_2 = 0.03$ for Facebook dataset with $\approx 12k$ nodes. In the simulation results, we are interested in the immediate increment of $\lambda(\mS_{\vx^*}\mB)$ (where $\mB = \mA + \gamma \vu \vu^T$) from its initial value $\tau_2 \lambda(\mS_{\vx^*}\mB) = 1$ for a given initial pair $(\gamma^c, \vu^c)$. 

We set the budget as per Proposition \ref{prop::budget equality} and run our simulation for different values of $\epsilon$ (parameter that controls the perturbation of $\vu$) under two scenarios: (i) when the cost of resource allocation is homogeneous across all nodes, for example, $w_i = 1$ for all $i \in \cN$ and, (ii) when the cost of resource allocation is heterogeneous, for example, one instance of $\vw$ sampled uniformly at random. The proposed algorithm for obtaining the locally optimal set of users is given in Algorithm \ref{algo:our_algo}.

\subsection{Comparison with Baseline Methodologies}

\setlength{\textfloatsep}{4pt}
\begin{algorithm}[!t]
    \DontPrintSemicolon
    \SetAlgoLined
    \SetKwComment{Comment}{/* }{ */}
    \SetKwProg{Init}{Initialize}{:}{}
    \caption{Local Search for bi-SIS}
    \label{algo:our_algo}
            \KwIn{$\mA, \tau_1, \tau_2, \vw, \epsilon$}
            \KwOut{$\vu$}
            
            \Init{}{
                $\vecv(0) = \vecnu^c$\Comment*[r]{$\mS_{\vx^*}\mA$'s eigenvector}
                $\vu(0) = \vu_0$\Comment*[r]{as in Lemma \ref{lemma::initial_candidate}}
                $\gamma(0) = \gamma_0$\Comment*[r]{as in Lemma \ref{lemma::initial_candidate}}
                $\epsilon(0)=\epsilon$\;
                }

                $\epsilon(1) \leftarrow \min\{\epsilon(0), \min\{\vu(0)\}\}$\;
                $\boldsymbol \alpha \leftarrow \mathcal{P}(\vecv(0),$ $\epsilon(1), \vu(0))$\Comment*[r]{Solution to optimization problem}
                $\vu(1) \leftarrow \vu(0) + \boldsymbol \alpha$\; 

\end{algorithm}

To evaluate the performance, we compare our proposed approach of selecting a community against two methodologies: (i) standard centrality measures such as degree-based centrality and eigenvector-based centrality, and (ii) NetShield algorithm \cite{chen2015node}, as baseline algorithms. In this section, we describe each of the methodologies along with the reasoning for selecting these as baseline algorithms.

Degree-based centrality is a measure of node connectivity based on the number of links (degree) each node has with others which is useful in identifying popular nodes in a network who can further connect to a wider network \cite{newman2018networks}. Similar to this, eigenvector-based centrality (EVC) measures a node's influence that it has on other nodes in the network. This is done by assigning an importance score based on the fact that a node is important when it is linked to another important node. EVC identifies not only the influence of a node on a direct connection, but also the impact on the entire network \cite{wasserman1994social}. We chose to compare against these measures to gain a holistic analysis against such well-known centrality methods which are generally used to understand the relationship structure in a social network \cite{freeman1978centrality, friedkin1991theoretical}. Also, one of the reasons to choose these centrality measures is because one would assume that spending resources on influential users (users with higher degree or evc score) would return higher market share, which we show otherwise in our simulation results. 

From the methodological perspective, Chen et. al.~\cite{chen2015node} uses a very similar epidemic modeling approach and have proposed an algorithm `NetShield' that helps in determining the number of users/nodes who can be immunized to contain the epidemic. Within our context, we utilize this algorithm as one of the baselines to determine the number of users who can be selected to form the community. In \cite{chen2015node}, NetShield algorithm aims to minimize the influence spread by first selecting and then deleting those nodes who are more likely to further spread the influence in the graph. For baseline comparison, we utilize this algorithm to determine those highly influential users who can be recruited to form a community who can then increase the influence spread of the new product in the graph. 
\begin{algorithm}[!t]
    \DontPrintSemicolon
    \caption{Centrality Measure Methodology}
    \SetKwComment{Comment}{/* }{ */}
    \label{algo:bm}
            \KwIn{$\mA, \tau_1, \tau_2, \gamma^c, N, C, \vw, \mu$}\Comment*[r]{$\mathbf{\mu}$ = chosen centrality measure}
            \KwOut{$\vu$}
                    $u_i = \frac{C \mu_i w_i}{\sum \limits_{j \in T}\mu_j w_j}$ for all $i \in N$\;
                    
 \end{algorithm}
In each of these baseline methodologies, while the total budget remains the same, that is, $C = \sum_i \sqrt{\gamma} w_i u_i$, we assign the cost $w_i$ (either homogeneous or heterogeneous) based on the ratio of its degree and importance score for degree and eigenvector centrality measure, respectively as outlined in Algorithm \ref{algo:bm}. Similarly, the cost $w_i$ is assigned to the users obtained from the NetShield algorithm. The obtained $\vu$ is then similarly fed to \eqref{eq:fpe y} to obtain the market share of Product 2.

\subsection{Simulation Results}

Here, we present the comparison results for both the homogeneous and heterogeneous cost distribution for the above mentioned real-world graphs dataset. To visualize our system in the two dimensions, we represent Product 2's (Product 1) market share on the y-axis as \textit{AvgY}$= (1/N) \sum_{i \in \cN} y_i$ (\textit{AvgX}$= (1/N) \sum_{i \in \cN} x_i$). 

As mentioned in Section \ref{subsec:sensitivity ybar}, the step size $\epsilon$ is a small constant that controls the maximal size of perturbation in $\vu$. The obtained $\vu$ is then fed into the bi-SIS ODE \eqref{eq:fpe y} along with other associated parameters which is then simulated over 1000 time steps to obtain the market share of Product 2 (Similarly for Product 1 using \eqref{eq:fpe x}). Figures \ref{fig:diff_epsilon_fbhomo} -- \ref{fig:diff_epsilon_fbpublichetero} show how the market shares of Product 1 and Product 2 change based on different $\epsilon$ values when the cost is homogeneously and heterogeneously distributed. The results indicate that as the $\epsilon$ value increases, the market share of Product 2 also increases while the market share of Product 1 decreases, under both homogeneous and heterogeneous cost distribution. This demonstrates that the users' probability of willingness to participate in the community also increases as the $\epsilon$ value increases, thereby resulting in a higher market share. 

\begin{figure}[!t]
    \centering
    \includegraphics[width = 0.78\columnwidth]{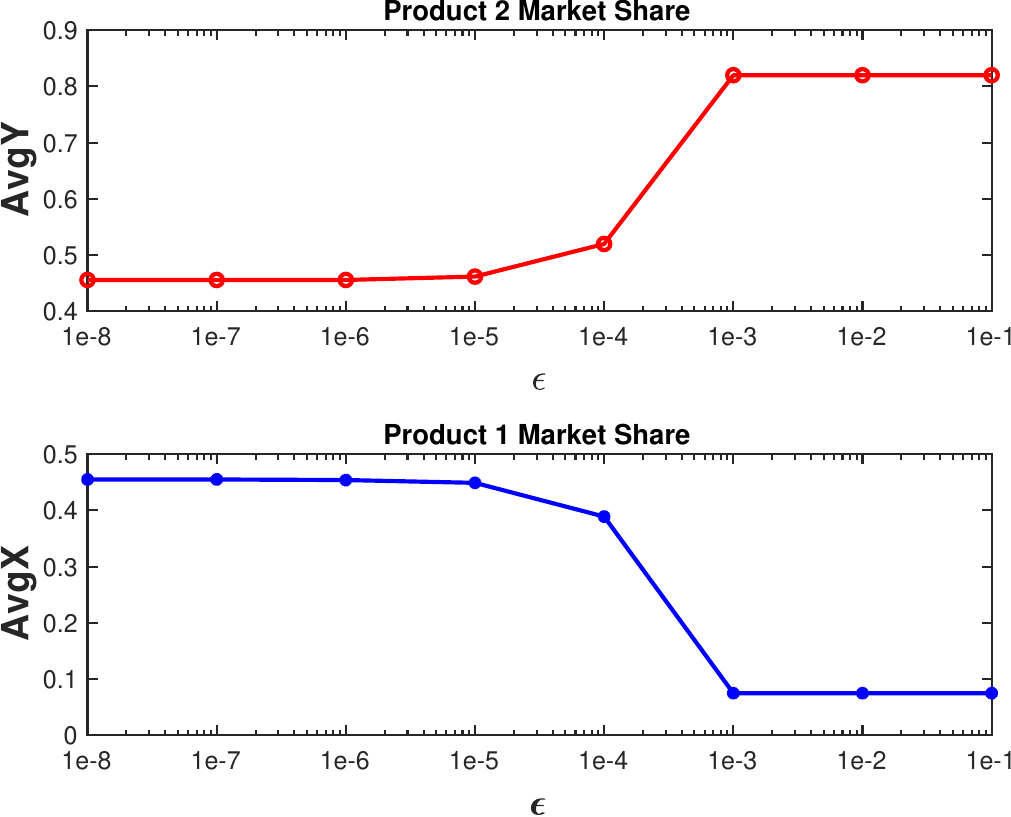}
     \caption{Facebook ($\approx4k$ nodes): Change in $\bar{\vy}$ (AvgY) and $\bar{\vx}$ (AvgX) for \textit{Homogeneously} distributed cost and different values of $\epsilon$.}
     \label{fig:diff_epsilon_fbhomo}
\end{figure}

\begin{figure}[!h]
    \centering
    \includegraphics[width = 0.78\columnwidth]{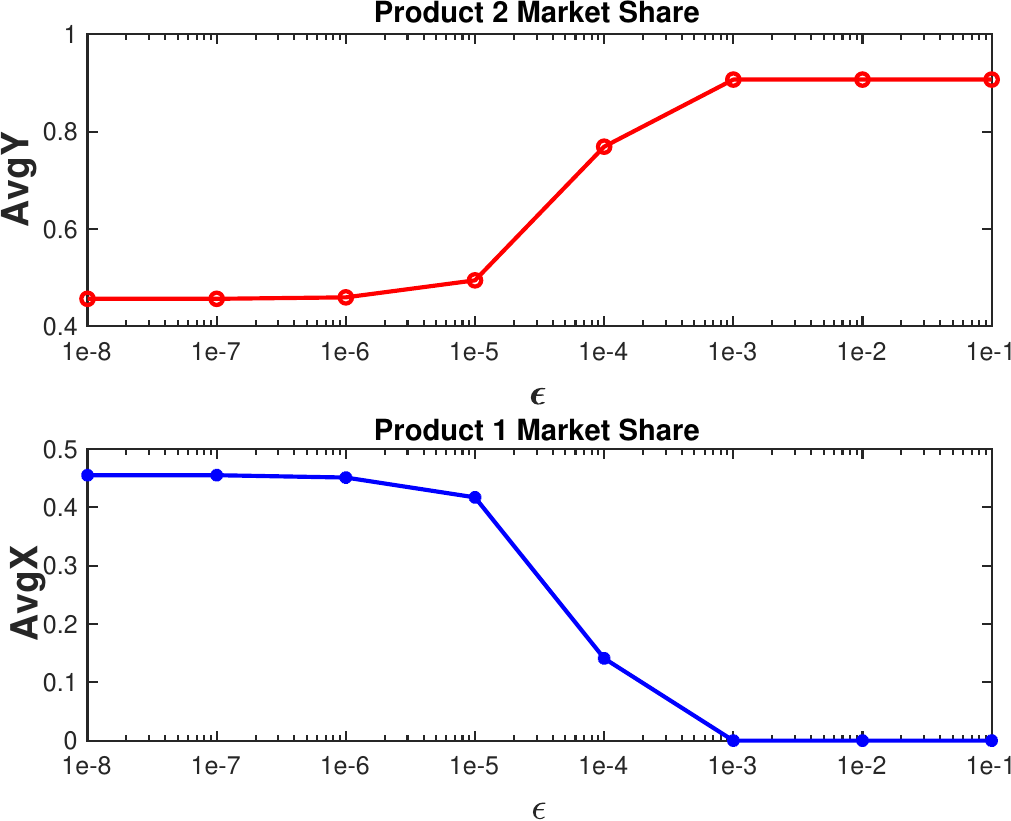}
    \caption{Facebook ($\approx4k$ nodes): Change in $\bar{\vy}$ (AvgY) and $\bar{\vx}$ (AvgX) for \textit{Heterogeneously} distributed cost and different values of $\epsilon$.}
    \label{fig:diff_epsilon_fbhetero}
\end{figure}

\begin{figure}[!h]
    \centering
    \includegraphics[width = 0.78\columnwidth]{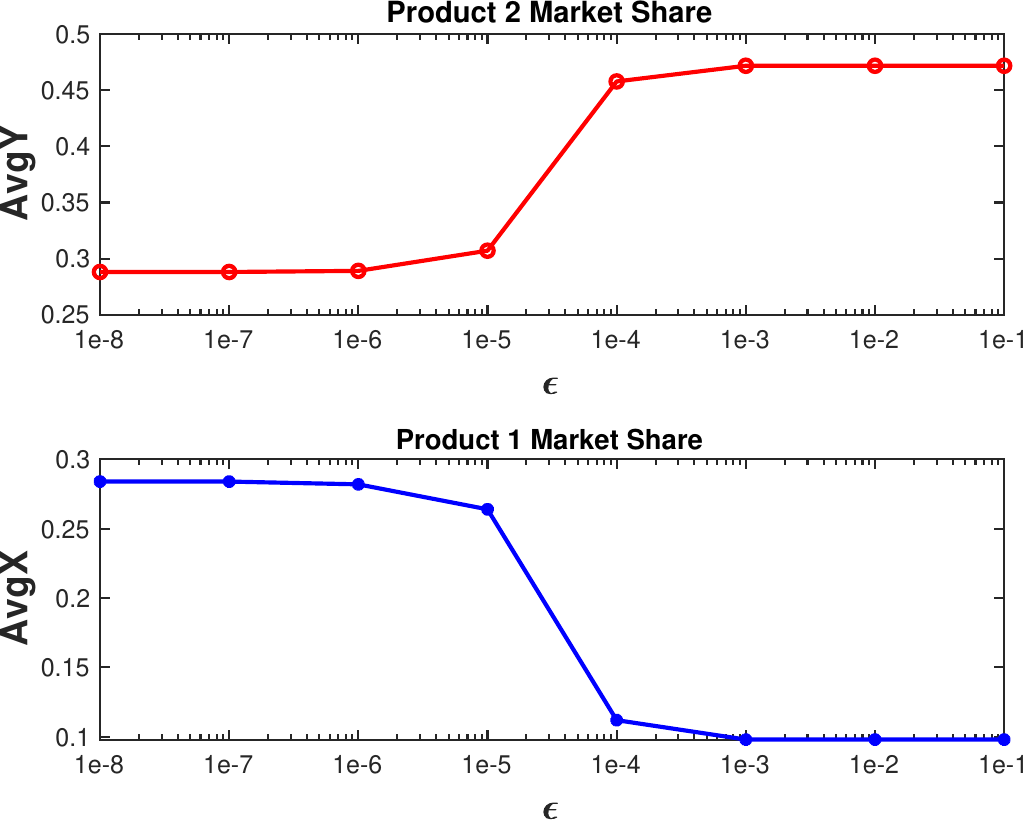}
    \caption{Facebook ($\approx12k$ nodes): Change in $\bar{\vy}$ (AvgY) and $\bar{\vx}$ (AvgX) for \textit{Homogeneously} distributed cost and different values of $\epsilon$.}
    \label{fig:diff_epsilon_fbpublichomo}
\end{figure}
\begin{figure}[!h]
    \centering
    \includegraphics[width = 0.78\columnwidth]{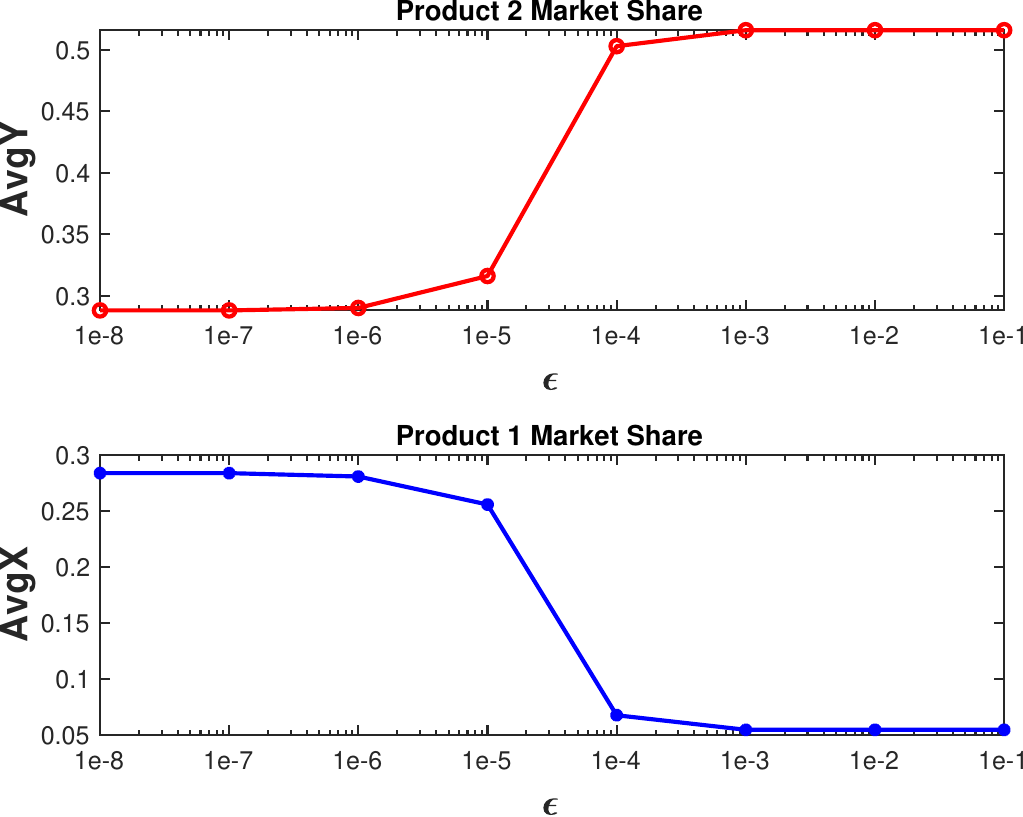}
    \caption{Facebook ($\approx12k$ nodes): Change in $\bar{\vy}$ (AvgY) and $\bar{\vx}$ (AvgX) for \textit{Heterogeneously} distributed cost and different values of $\epsilon$.}
    \label{fig:diff_epsilon_fbpublichetero}
\end{figure}

Figure \ref{fig:baseline} shows that for the locally optimal choice of users returned by our approach for different values of $\epsilon$, under both heterogeneous and homogeneous cost distribution, the initial weaker/new product has a higher market share than the baseline centrality measures and NetShield algorithm, thus validating our theoretical results. Although it is usually assumed that spending resources on users with the highest number of followers (influential users) will result in a higher market share, our simulation results show otherwise, i.e., any set of users selected by our algorithm, irrespective of the number of followers, can result in a higher market share. This result has profound implications on the budget constraint, i.e., maximum amount that can be spent on recruiting users, as users with more followers are usually very expensive. Note that similar numerical results can indeed be obtained for even larger graphs than shown here, for any choice of budget, $C$, as long as the cost assigned to each $u_i$ is not too large and is thus, omitted here to avoid repetition. 

\begin{figure}[!h]
    \centering
    \includegraphics[width = 0.8\columnwidth]{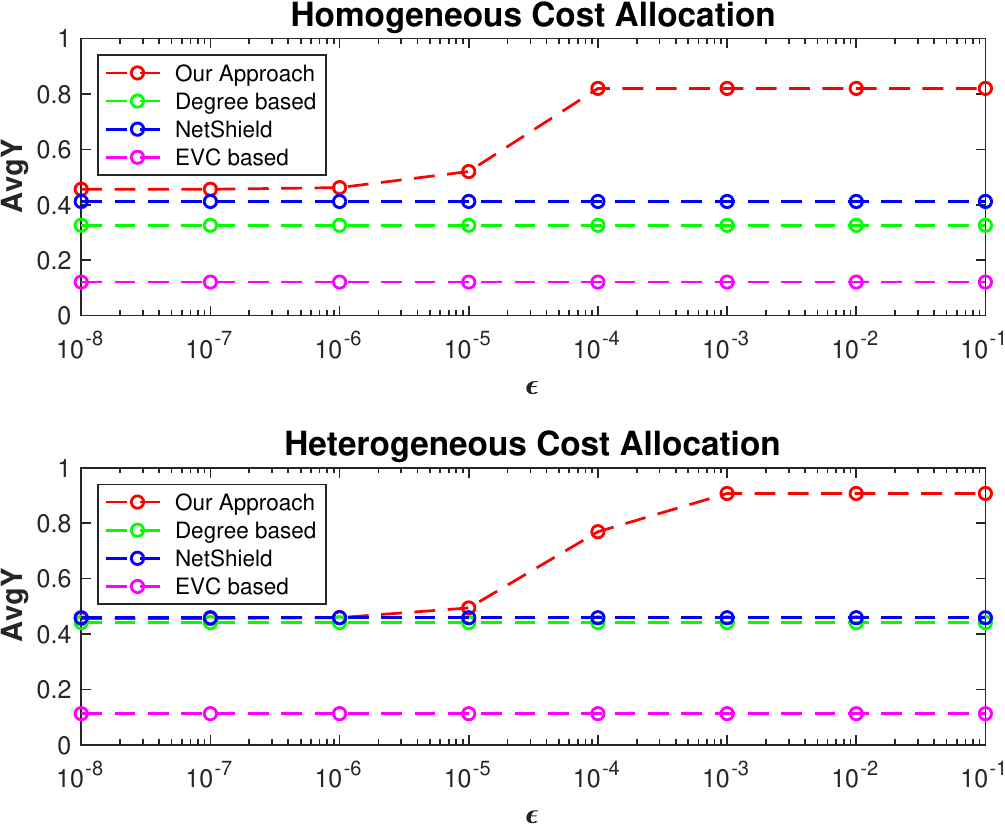}
    \caption{Facebook ($\approx4k$ nodes): Change in $\bar{\vy}$ (AvgY) for Homogeneously \& Heterogeneously distributed cost against Degree-based, Eigenvector-based centrality measure, \& NetShield Algorithm.}
    \label{fig:baseline}
\end{figure}


\section{Conclusion \& Future Work}\label{conclusion}

We have developed an algorithm that returns an optimal set of users that form a community, that is, the best possible pair of $(\gamma,\vu)$ and these set of users when targeted with advertisement under a given budget constraint results in a positive market share of the new product, Product 2. Through perturbation approach we identify the set of users that can be influenced through advertisement so that Product 2 can enter into the network with a positive market share. We also compare our proposed approach against standard centrality measures and NetShield algorithm (used in finding nodes in SIS epidemic model) using two different datasets. The results indicate that our approach of targeting the chosen users outperforms in both the cases when the budget is assigned either homogeneously or heterogeneously to users that are part of the community. We believe that our work is the first step towards finding an optimal size of community for a new firm (launching a product) to compete against a dominant product in a market. By basing our theoretical analysis using the bi-SIS epidemic model and extensive simulation on different real-world graphs demonstrate the efficacy of our approach, including its exceeding performance over the standard centrality measures.

While this paper focuses on finding an optimal set of users to maximize the market share of a new product competing against an existing dominant product, we abstract away from analyzing multiple products' (more than two) competition. As an avenue of future research, utilizing a multi-layer graph structure and incorporating multiple products’ competition would greatly contribute to a better understanding of how a new product strives to improve its market share in the presence of multiple products.

\appendix
\section{Appendix}\label{appendix:section3}

\subsection{Proof of Proposition \ref{prop::budget equality}}\label{appendix:prop budget equality}
\begin{proof}Let $( \gamma,  \vu)$ and $(\hat \gamma, \hat \vu)$ be such that we have $ \gamma  u_i  u_j \leq \hat \gamma \hat u_i \hat u_j$, with strict inequality for at least one pair of $i,j$. Let $\vy$, $\vz$ be the resulting fixed point for Product 2 under parameters $( \gamma, \vu)$, $(\hat \gamma, \hat \vu)$), respectively. We show that $\bar{y} < \bar{z}$, that is, it is always better to spend more if the budget allows for it. Let $(\vx,\vy)$ and $(\hat \vx(t), \hat \vy(t))$ denote the fixed points of the ODE system \eqref{eq:ode_gammaUU-vector} with parameters $( \gamma, \vu)$ and $(\hat \gamma, \hat \vu)$, respectively, where the initial point $(\hat \vx(0), \hat \vy(0)) = (\vx,\vy)$. Then, we have\footnote{For any two vectors $\vp = [p_i]$ and $\vq = [q_i]$, the inequality $\vp > \vq$ implies $p_i \geq q_i$, with strict inequality for at least one entry.}
\begin{equation}
     \left.  \frac{d \hat \vx(t)}{dt}\right\vert_{t=0} = \beta_1 \text{diag}(\ones \!-\!\vx\!-\!\vy(t))\mA\vx - \delta_1 \vx = 0 \nonumber
\end{equation}

\begin{equation}
    \left. \frac{d \hat \vy(t)}{dt}\right\vert_{t=0} = \beta_2 \text{diag}(\ones\!-\!\vx\!-\!\vy)\left[ \mA+\hat\gamma \hat\vu \hat\vu^T \right]\vy -\delta_2 \vy \nonumber
\end{equation}

\begin{equation}
  \hspace{5em}  > \beta_2 \text{diag}(\ones\!-\!\vx\!-\!\vy)\left[ \mA+\gamma \vu \vu^T \right]\vy -\delta_2 \vy = 0.\nonumber
\end{equation}

Therefore, there exists an $\epsilon > 0$ for which we have $\vx \geq \hat \vx(\epsilon)$ and $\vy \leq \hat \vy(\epsilon)$. From \cite{doshi2021competing}, we know that the bi-SIS system is strongly monotone \cite{smith1995monotone} for $\vx,\vy \in (0,1)^N$. Hence, $\vy < \hat \vy(\epsilon) \ll \vy(t+\epsilon)$ for all $t > 0$, and since $\vz = \lim_{t\to \infty} \vy(t+\epsilon)$, we have $\vy \ll \vz$.
\end{proof}

\subsection{Proof of Lemma \ref{lemma::initial_candidate}}
\begin{proof}
Consider the matrix $\mM = \mS_{\vx^*}(\mA + \gamma \vu \vu^T)$. By substituting $\vu = \mS_{\vx^*}^{-1}\vecv$,\footnote{The vector $\vu$ can actually be taken as any scalar multiple of $\mS_{\vx^*}^{-1}\vecv$, that is $\vu = c \mS_{\vx^*}^{-1}\vecv$  for any $c \in \R$. We drop the `$c$' notation for clarity.} it can be written as 
$$\mM = \mS_{\vx^*}(\mA + \gamma \mS_{\vx^*}^{-1}\vecv\vu^T) = \mS_{\vx^*}\mA + \gamma \vecv\vu^T,$$
\noindent which is now in the form of a rank-one perturbation to the matrix $\hat{\mM} = \mS_{\vx^*}\mA$, and is in the same form as in Proposition 1.1 in \cite{bru2012brauer} with $\vecv$ as the first eigenvector of $\hat{\mM}$, and $\gamma \vu$ being the additional $N-$dimensional vector. Then, from Proposition 1.1 in \cite{bru2012brauer}, $\vecv$ is also the first eigenvector of $\mM = \mS_{\vx^*}(\mA + \gamma \vu\vu^T)$ associated with eigenvalue $\lambda(\mM) = \lambda(\mS_{\vx^*}\mA) + \gamma \vu^T\vecv = \lambda(\mS_{\vx^*}\mA) + \gamma \vu^T\mS_{\vx^*}\vu$. Finally, by rearranging the above terms, we can check that the value of $\gamma$ for which $\lambda(\mM) = 1/\tau_2$ is given by $\gamma = \frac{1/\tau_2 - \lambda(\mS_{\vx^*}\mA)}{\vu^T\mS_{\vx^*}\vu}$.
\end{proof}

\subsection{Proof of Theorem \ref{theorem::avg_y}}

We begin by taking the derivative of \eqref{eq:fpe y} with respect to $\gamma$ and $u_r$ evaluated at $(\gamma^c,\vu^c)$ which yields\footnote{We define \textit{c.p.} to refer to the point at which we take derivative of \eqref{eq:fpe y}, i.e., $c.p. \triangleq  (\gamma,\vu)=(\gamma^c,\vu^c)$.}
\begin{equation}\label{eq:afterder}
    \left.\frac{\partial y_i}{\partial \gamma} \right \vert_{c.p.} = \tau_2 (1- x^*_i) \sum_{j \in \cN}( a_{ij} + \gamma^c u_i{^c} u_{j}{^c}) \left.\frac{\partial y_i}{\partial \gamma} \right \vert_{c.p.}
\end{equation}
\begin{equation}\label{eq:afterderu}
     \left.\frac{\partial y_i}{\partial u_r} \right \vert_{c.p.} = \tau_2 (1- x^*_i) \sum_{j \in \cN}( a_{ij} + \gamma u_i^{c} u_{j}^{c}) \left.\frac{\partial y_i}{\partial u_r} \right \vert_{c.p.},
\end{equation}
where $\partial \vy / \partial \gamma = [\partial y / \partial \gamma]_i$ and $\partial \vy / \partial u_r = [\partial y / \partial u_r]_i$ are eigenvectors of $\mS_{\vx^*}[\mA + \gamma \vu \vu^T]$ up to some multiplicative constants (PF theorem \cite{meyer2000matrix}) corresponding to eigenvalue $\tau_2$. We first give the following result, to help us derive Lemma \ref{lemma::avg_y-old}, which in turn helps in determining the multiplicative constants.

\begin{lemma}\label{lemma::eig_val_perturbation}
Let $\epsilon_1, \epsilon_2 > 0$ be any small constants such that $\gamma(\epsilon_1)$ and $\vu(\epsilon_2)$ are perturbations around $\gamma, \vu$. Then, for all $\epsilon_3 \!>\! 0$ such that $\lambda(\gamma,\vu) = \lambda(\gamma^c,\vu^c) + \epsilon_3$, there exists $\epsilon_4 \!>\!0$ such that $\vy = \epsilon_4 \vecv(\gamma,\vu)$.
\end{lemma}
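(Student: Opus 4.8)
The plan is to treat the emergence of the small positive fixed point $\vy$ as a near-threshold (transcritical) bifurcation of equation \eqref{eq:fpe y}, and to show that to leading order this fixed point must align with the Perron--Frobenius (PF) direction of $\mS_{\vx^*}\mB$, where $\mB = \mA + \gamma\vu\vu^T$. First I would regard $\vx$ as an implicit function of $\vy$ through \eqref{eq:fpe x}: for each small $\vy$ there is a unique $\vx(\vy)$ solving \eqref{eq:fpe x} with $\vx(\0) = \vx^*$ and $\vx(\vy) = \vx^* + O(\|\vy\|)$. Substituting into \eqref{eq:fpe y} and writing $\mS_{\vx(\vy)+\vy} = \mS_{\vx^*} - \text{diag}(\vx(\vy) - \vx^* + \vy)$, the correction multiplies $\mB\vy$, which is itself $O(\|\vy\|)$; hence the fixed-point relation reduces to $\vy = \tau_2\,\mS_{\vx^*}\mB\,\vy + O(\|\vy\|^2)$, i.e. the linearization of the $\vy$-map at $\vy=\0$ is exactly $\tau_2\mS_{\vx^*}\mB$.

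The second step pins down the direction. At the critical parameters $\tau_2\mS_{\vx^*}\mB$ has PF eigenvalue $\tau_2\lambda(\gamma^c,\vu^c)=1$ with PF eigenvector $\vecnu^c$, so $\vy=\0$ is a non-hyperbolic fixed point. When the parameters are perturbed so that $\lambda(\gamma,\vu) = \lambda(\gamma^c,\vu^c) + \epsilon_3$, the PF eigenvalue of the linearization becomes $1 + \tau_2\epsilon_3 > 1$, and the survival/coexistence condition recalled after Proposition \ref{prop::budget equality} guarantees a fixed point $\vy \gg \0$. By continuity of the fixed point in the parameters (with $\vy \to \0$ as the parameters approach the critical curve), this $\vy$ is small, so the leading-order relation $\vy \approx \tau_2\mS_{\vx^*}\mB\vy$ says that $\vy$ is, within $O(\|\vy\|^2)$, an eigenvector of $\mS_{\vx^*}\mB$ with eigenvalue $\approx 1/\tau_2 \approx \lambda(\gamma,\vu)$. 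Since $\vy$ is strictly positive, the PF theorem --- which asserts that the only eigenvector of an irreducible nonnegative matrix with all-positive entries is the PF one --- forces $\vy$ to be proportional to $\vecv(\gamma,\vu)$, so $\vy = \epsilon_4\,\vecv(\gamma,\vu)$ for some scalar $\epsilon_4>0$ to leading order in the perturbation (consistent with the paper's caveat that the result holds in a small region around the critical curve).

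Finally I would record positivity and the scale of $\epsilon_4$: strong monotonicity of the bi-SIS system (as used in the proof of Proposition \ref{prop::budget equality}, following \cite{doshi2021competing, smith1995monotone}) yields uniqueness of the coexistence fixed point, and since both $\vy \gg \0$ and $\vecv(\gamma,\vu) \gg \0$ the proportionality constant satisfies $\epsilon_4 > 0$; matching the next order in the expansion gives $\epsilon_4 = \Theta(\epsilon_3)$. I expect the main obstacle to be the rigorous bookkeeping that the implicit dependence of $\vx$ on $\vy$ contributes only at $O(\|\vy\|^2)$ and therefore cannot rotate the leading-order direction away from $\vecv(\gamma,\vu)$ --- equivalently, confirming that the Jacobian of the $\vy$-map at $\vy=\0$ is $\tau_2\mS_{\vx^*}\mB$ and carries no term from $D\vx(\0)$. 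Once that reduction is secured, a Lyapunov--Schmidt argument (the bifurcating branch is tangent to $\ker(\tau_2\mS_{\vx^*}\mB - \eye) = \mathrm{span}\,\vecv(\gamma,\vu)$) together with the uniqueness of the positive eigenvector from the PF theorem delivers the claimed proportionality.
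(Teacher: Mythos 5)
Your proposal is correct in substance, but it reaches the conclusion by a genuinely different route from the paper. The paper never analyzes the fixed-point map in $\vy$: it differentiates the equilibrium equation \eqref{eq:fpe y} with respect to the parameters $\gamma$ and $u_r$ at the critical point (equations \eqref{eq:afterder}--\eqref{eq:afterderu}), notes that because $\vy=\0$ and $\vx=\vx^*$ there, the parametric derivatives $\partial\vy/\partial\gamma$ and $\partial\vy/\partial u_r$ themselves satisfy the eigenvalue problem for $\mS_{\vx^*}[\mA+\gamma^c\vu^c{\vu^c}^T]$ with eigenvalue $1/\tau_2$, and then argues via Perron--Frobenius that below the critical curve the only solution is $\0$ (since $1/\tau_2$ is then not an eigenvalue), while above it the derivative --- hence the leading-order fixed point --- is a positive multiple of the PF eigenvector. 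You instead eliminate $\vx$ through the implicit function theorem, identify the Jacobian of the reduced $\vy$-map at $\vy=\0$ as $\tau_2\mS_{\vx^*}\mB$ with $\mB=\mA+\gamma\vu\vu^T$, obtain existence and smallness of $\vy\gg\0$ from the coexistence condition plus continuity in the parameters, and pin down the direction by a spectral-gap/Lyapunov--Schmidt argument. What each buys: the paper's route is shorter and feeds directly into the derivative notation used for Theorem \ref{theorem::avg_y}, but it tacitly assumes the fixed-point branch is differentiable in $(\gamma,\vu)$ at the bifurcation (where it is at best one-sidedly so, being identically $\0$ on one side); your route avoids that assumption and makes existence and positivity explicit, at the cost of the second-order bookkeeping you flag --- which the paper does not escape either, since its proof of Lemma \ref{lemma::avg_y-old} performs exactly your tangency step, writing $\vy=\alpha\vecv+\beta\vw$ with $\beta=o(\alpha)$ and matching orders. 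One point to make airtight in your write-up: Perron--Frobenius uniqueness of positive eigenvectors applies to \emph{exact} eigenvectors, so positivity of $\vy$ alone does not close the argument; you need the spectral gap of $\mS_{\vx^*}\mB$ (which has real, separated eigenvalues since it is similar to a symmetric matrix, cf.\ Proposition \ref{similarity trans}) to conclude that the components of $\vy$ off the PF direction are $O(\|\vy\|^2)$, exactly as your Lyapunov--Schmidt remark intends.
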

\begin{proof} 
Since \eqref{eq:afterder} is an eigenvalue problem of the form $\mathbf{A} \nu = \lambda \mathbf{\nu}$, it allows for $\left.\frac{d y_{i}}{d \gamma}\right \vert_{\gamma = \gamma^c} \!>\! 0$ as $\left.\lambda(\mS_{\vx^*}[\mA+\gamma\vu\vu^T])\right \vert_{\gamma = \gamma^{c}} = 1/\tau_2$. Similarly, it can be shown that $\left.\frac{d y_{i}}{d u_i}\right \vert_{u_i = u_i^c} \!>\! 0$.

From \eqref{eq:afterder}, we have that $1/\tau_2 = \lambda(\gamma^c, \vu^c) > 0$ and $\frac{d y_i}{d \gamma} \triangleq [v]_i$, i.e., $\vy = \epsilon \vecv(\gamma,\vu)$ is the eigenvector of $\mS_{\vx^*}[\mA+\gamma\vu\vu^T]$. For $\lambda(\gamma^c, \vu^c) > \lambda(\gamma, \vu)$, $1/\tau_2$ cannot be the eigenvalue and only possible solution is $[v]_i = 0$. Hence, for $\lambda(\gamma, \vu) > \lambda(\gamma^c, \vu^c)$, i.e., for any small arbitrary constant $\epsilon,~\lambda(\gamma, \vu)\!=\!\lambda(\gamma^c, \vu^c) + \epsilon$,  $1/\tau_2$ is the eigenvalue with $\vy = \epsilon \vecv(\gamma,\vu)$.
\end{proof}
Now, the following lemma helps in determining the constants.

\begin{lemma}\label{lemma::avg_y-old}
    Let $\lambda(\gamma, \vu) \triangleq \lambda(\mS_{\vx^*}[\mA + \gamma \vu \vu^T])$, and $(\gamma^c,\vu^c)$ be any pair of critical parameters, and let $\bar \vy(\gamma,\vu) \triangleq \ones^T\vy(\gamma\vu)/N$ denote the market share as a function of any feasible $(\gamma,\vu)$. Then, $\bar \vy(\gamma,\vu)$ can be written as,
  \begin{equation}\label{market share given lambda}
  \begin{aligned}
     \bar{\vy}(\gamma,\vu) &= \frac{\tau_2\ones^T \vecv(\gamma^c,\vu^c)}{N \sum \limits_{i=1}^{N} \frac{v_{i}(\gamma^c,u_i^c)^3}{(1-x_i^*)^2}} \left( \lambda(\gamma,\vu) - 1/\tau_2 \right) \\
     &+ 
     O\left( \left( \lambda(\gamma,\vu) - 1/\tau_2 \right)^2 \right).
   \end{aligned}
  \end{equation}
\end{lemma}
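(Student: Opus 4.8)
The plan is to treat the fixed-point equation \eqref{eq:fpe y} as a self-consistent eigenvalue problem and perform a Lyapunov--Schmidt reduction around the critical curve. Writing $\mB \triangleq \mA + \gamma\vu\vu^T$ and evaluating the invader equation at Product~1's resident equilibrium $\vx^*$ (consistent with the survival threshold being stated through $\mS_{\vx^*}$), equation \eqref{eq:fpe y} becomes $\tfrac{1}{\tau_2}\vy = \text{diag}(\ones-\vx^*-\vy)\,\mB\,\vy$; that is, $\vy$ is the Perron--Frobenius eigenvector of $\text{diag}(\ones-\vx^*-\vy)\mB$ with eigenvalue \emph{exactly} $1/\tau_2$. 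At $\vy=\0$ this operator is $\mS_{\vx^*}\mB$, whose PF eigenvalue is $\lambda(\gamma,\vu)$ with eigenvector $\vecv \triangleq \vecv(\gamma,\vu)$; as $\vy$ grows the diagonal $\ones-\vx^*-\vy$ shrinks and pushes the eigenvalue down, and self-consistency forces it back to $1/\tau_2$. Since on the critical curve $\lambda(\gamma,\vu)$ sits just above $1/\tau_2$, the required amplitude of $\vy$ is small, and by Lemma~\ref{lemma::eig_val_perturbation} I may write $\vy = \kappa\vecv + \vr$ with $\kappa>0$ small and remainder $\vr=O(\kappa^2)$. The whole proof reduces to pinning down $\kappa$ as a function of $\delta \triangleq \lambda(\gamma,\vu)-1/\tau_2$.

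Next I would substitute this ansatz into the rearranged identity $\mS_{\vx^*}\mB\vy - \tfrac{1}{\tau_2}\vy - \text{diag}(\vy)\mB\vy = \0$ and expand in $\kappa$. The leading linear part cancels through the eigen-relation $\mS_{\vx^*}\mB\vecv = \lambda(\gamma,\vu)\vecv$, leaving $\kappa\delta\vecv + (\mS_{\vx^*}\mB - \tfrac{1}{\tau_2}\eye)\vr - \kappa^2\lambda(\gamma,\vu)\,\vq + O(\kappa^3)=\0$, where $\delta=O(\kappa)$ and the quadratic saturation term $-\text{diag}(\vy)\mB\vy$ has produced $-\kappa^2\lambda(\gamma,\vu)\,\vq$ with $q_i = v_i^2/(1-x_i^*)$ (using $\mB\vecv = \lambda(\gamma,\vu)\mS_{\vx^*}^{-1}\vecv$). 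Because $\mS_{\vx^*}\mB$ is self-adjoint for the inner product weighted by $\mS_{\vx^*}^{-1}$, its left PF eigenvector is $\mS_{\vx^*}^{-1}\vecv$; left-multiplying the displayed identity by $(\mS_{\vx^*}^{-1}\vecv)^T$ collapses the $(\mS_{\vx^*}\mB-\tfrac1{\tau_2}\eye)\vr$ term to $\delta\langle \mS_{\vx^*}^{-1}\vecv,\vr\rangle = O(\kappa^3)$ and leaves the scalar secular equation $\delta\sum_i v_i^2/(1-x_i^*) = \kappa\,\lambda(\gamma,\vu)\sum_i v_i^3/(1-x_i^*)^2$ at order $\kappa^2$.

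Solving the secular equation gives $\kappa = \dfrac{\delta\,\sum_i v_i^2/(1-x_i^*)}{\lambda(\gamma,\vu)\,\sum_i v_i^3/(1-x_i^*)^2}$, and since $\bar{\vy} = \ones^T\vy/N = \kappa\,\ones^T\vecv/N + O(\kappa^2)$, substituting $\kappa$ and using $1/\lambda(\gamma,\vu) = \tau_2 + O(\delta)$ together with the normalization $\sum_i v_i^2/(1-x_i^*)=1$ for $\vecv$ produces exactly \eqref{market share given lambda}. The $O(\delta)$ slack incurred by replacing $\lambda(\gamma,\vu)$ with $1/\tau_2$ and $\vecv(\gamma,\vu)$ with $\vecv(\gamma^c,\vu^c)$ is absorbed into the stated $O((\lambda-1/\tau_2)^2)$ error, and positivity of the coefficient is immediate since $\vecv\gg\0$ and $1-x_i^*>0$.

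The main obstacle, and the step needing the most care, is the feedback of Product~2 onto Product~1. Turning on $\vy=\kappa\vecv$ perturbs the resident fixed point to $\vx = \vx^* + \kappa\vg + O(\kappa^2)$, where $\vg$ solves a linear system driven by $\vecv$; carried through the projection this would replace $\sum_i v_i^3/(1-x_i^*)^2$ by $\sum_i v_i^2(g_i+v_i)/(1-x_i^*)^2$. Obtaining the clean cubic form in \eqref{market share given lambda} therefore hinges on freezing the resident at $\vx^*$, precisely as the survival threshold $\tau_2\lambda(\mS_{\vx^*}\mB)>1$ is defined, and I would make this reduction explicit and argue it is the consistent leading-order model for the invader. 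The secondary technical points are (i) verifying that $\vr$ enters only at $O(\kappa^2)$ after projection, so it is safely hidden in the error term, and (ii) fixing the normalization $\sum_i v_i^2/(1-x_i^*)=1$ of the PF eigenvector, which is what makes the displayed coefficient scale-correct.
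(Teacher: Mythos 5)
Your derivation is correct and lands on exactly the coefficient in \eqref{market share given lambda}, but it gets there by a genuinely different route than the paper. The paper left-multiplies the fixed-point equation \eqref{eq:fpe y} by $\ones^T$ and by $\vy^T\mS_{\vx}^{-1}$ to obtain the two coupled identities \eqref{sum_y}--\eqref{quad_y}, expands $\vy$ in the \emph{full} eigenbasis of $\mS_{\vx^*}[\mA+\gamma\vu\vu^T]$ (legitimate by the weighted orthogonality of Proposition \ref{similarity trans}), and then pins down the leading exponents of the coefficients $c_1,c_k$ through the four-case power-matching argument (concluding $p=1$, $q=2$), from which the formula is read off. You instead run a Lyapunov--Schmidt reduction: the ansatz $\vy=\kappa\vecv+\vr$ with $\vr=O(\kappa^2)$, a single projection of the rearranged equation $\mS_{\vx^*}\mB\vy-\tau_2^{-1}\vy-\text{diag}(\vy)\mB\vy=\0$ onto the left PF eigenvector $\mS_{\vx^*}^{-1}\vecv$, and the resulting secular equation $\delta\sum_i v_i^2/(1-x_i^*)=\kappa\lambda\sum_i v_i^3/(1-x_i^*)^2+O(\kappa^2)$, where $\delta=\lambda(\gamma,\vu)-1/\tau_2$. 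Both proofs rest on the same structural engine---the similarity of $\mS_{\vx^*}\mB$ to a symmetric matrix, hence an explicit left eigenvector and weighted orthogonality---but your one projection replaces the paper's exponent case analysis, which is where most of its length (and its typos: the cubic terms correctly appearing as $v_i^3$ in the lemma degrade to $v_i$ in the paper's final displays) resides. Your write-up also surfaces two points the paper passes over silently: (i) the stated coefficient is not invariant under rescaling of $\vecv$, so the normalization $\sum_i v_i^2/(1-x_i^*)=1$ you impose is genuinely needed for \eqref{market share given lambda} to be well-posed; and (ii) the clean $v_i^3$ form requires freezing the resident at $\vx^*$, since the feedback $\vx=\vx^*+\kappa\vg+O(\kappa^2)$ enters the secular equation at the same order $\kappa^2$ as the saturation term and would otherwise replace the denominator by $\sum_i v_i^2(v_i+g_i)/(1-x_i^*)^2$. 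The paper makes the identical approximation implicitly by conflating $\mS_{\vx}$ with $\mS_{\vx^*}$ between \eqref{quad_y} and its final Taylor expansion, so your proof establishes the lemma to the same standard while being shorter and more explicit about its assumptions; note that the constant $\zeta$ in Theorem \ref{theorem::avg_y} is insensitive to this distinction, so nothing downstream of the lemma is affected.
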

\begin{proof}

From \eqref{eq:fpe x} and \eqref{eq:fpe y}, we have the following expressions and then Proposition \ref{similarity trans} regarding orthogonality of right eigenvectors.
\begin{equation}\label{sum_y}
    \frac{\ones^T\vy}{\tau_2} = (\ones-\vx)^T\!\left[\mA \!+\! \gamma \vu\vu^T \right]\!\vy - \vy^T\!\left[\mA \!+\! \gamma \vu\vu^T \right]\!\vy    
\end{equation}

\begin{equation}\label{quad_y}
\begin{aligned}
    \frac{\vy^T\mS_{\vx}^{-1}\vy}{\tau_2} &= \vy^T\!\left[\mA \!+\! \gamma \vu\vu^T \right]\!\vy \\
    & - \vy^T\mS_{\vx}^{-1}\text{diag}(\vy)\!\left[\mA \!+\! \gamma \vu\vu^T \right]\!\vy
\end{aligned}
\end{equation}

\begin{proposition}\label{similarity trans}
For all $k \neq j$, the right eigenvectors $\vecv_k$ of $\mM = \mS_{\vx^*}[A + \gamma \vu \vu^T]$ satisfy $\vecv_k^T\mS_{\vx}^{-1}\vecv_j = 0$.
\end{proposition}
\begin{proof}
Let, $\mS_{\vx}^{-1/2}\mM\mS_{\vx^*}^{1/2} = \mS_{\vx^*}^{1/2}[\mA + \gamma\vu\vu^T]\mS_{\vx^*}^{1/2} \triangleq \hat \mM$ hence, $\mM$ and $\hat \mM$ are similar matrices with same set of eigenvalues \cite{meyer2000matrix}, with right eigenvector of $\hat \mM$ corresponding to eigenvalue $\lambda_k$ given by $\mS_{\vx^*}^{-1/2}\vecv_k$. $\hat \mM$ is a symmetric matrix as $\mA + \gamma\vu\vu^T$ is symmetric, and its eigenvectors are orthogonal to each other. Hence, for all $k \neq j$, we have $\vecv_k^T \mS_{\vx^*}^{-1/2} \mS_{\vx^*}^{1/2} \vecv_j = \vecv_k^T \mS_{\vx^*}^{-1} \vecv_j = 0$.
\end{proof}
\noindent From $\vy$ as a linear combination of eigenvectors, we have
\begin{equation}\label{linear_combination}
    \vy = \sum\limits_{k=1}^N c_k \vecv_k.
\end{equation}
From the above equation, we have

\vspace{-3.4em}
\begin{multicols}{2}
  \begin{equation}
    \frac{\ones^T\vy}{\tau_2} = \sum\limits_{k=1}^N \tau_2^{-1}c_k \ones\vecv_k
    \label{sum_y_vk}, \\
  \end{equation}\break
  \begin{equation}
    \frac{\vy^T\mS_{\vx}^{-1}\vy}{\tau_2} = \sum\limits_{k=1}^N \tau_2^{-1}c_k^2.
    \label{quad_u_vk}
  \end{equation}
\end{multicols}
\noindent Substituting \eqref{linear_combination} in \eqref{sum_y} and in \eqref{quad_y} gives
\begin{equation}
    \begin{aligned}[b]\label{sum_y_full}
         \frac{\ones^T\vy}{\tau_2}
        &= \sum\limits_{k=1}^N c_k\lambda_k  \ones^T \vecv_k - c_k^2 \lambda_k.
    \end{aligned}
\end{equation}
\begin{equation}
    \begin{aligned}[b]\label{quad_y_full}
        \begin{split} 
        \frac{\vy^T\mS_{\vx}^{-1}\vy}{\tau_2}
        &= \sum\limits_{j=1}^N c_j^2 \lambda_j \\
        & - \sum\limits_{j=1}^N\sum\limits_{k=1}^N\sum\limits_{l=1}^N c_j c_k c_l\lambda_l \sum\limits_{i=1}^N \frac{[\vecv_j]_i[\vecv_k]_i[\vecv_k]_i}{(1-x_i)^2} 
     \end{split}
    \end{aligned}
\end{equation}
By equating \eqref{sum_y_vk} with \eqref{sum_y_full}, \eqref{quad_u_vk} with \eqref{quad_y_full} yields
\begin{equation}
    \sum\limits_{k=1}^{N} (\lambda_k - \tau_2^{-1})c_k\ones^T\vecv_k = \sum\limits_{k=1}^{N}c_k^2 \lambda_k
    \label{first}
\end{equation}
\begin{equation}
\small
    \sum\limits_{j=1}^{N}(\lambda_j - \tau_2^{-1})c_j^2 = \sum\limits_{j=1}^N\sum\limits_{k=1}^N\sum\limits_{l=1}^N c_j c_k c_l\lambda_l \sum\limits_{i=1}^N \frac{[\vecv_j]_i[\vecv_k]_i[\vecv_k]_i}{(1-x_i)^2} 
    \label{second}   
\end{equation}

From Lemma \ref{lemma::eig_val_perturbation}, we know $\lambda(\gamma,\vu) = \lambda(\gamma^c,\vu^c) + \epsilon_1$ for some $\epsilon_1>0$, there exists an $\epsilon_2>0$ such that $\vy = \epsilon_2 \vecv(\gamma^c,\vu^c)$. Also, $\vy = \alpha \vecv(\gamma^c,\vu^c) + \beta\vw$, where $\vw$ is a vector orthogonal to $\vecv(\gamma,\vu)$, and $\beta$ goes down to zero faster than $\alpha$ as $(\gamma,\vu)\to(\gamma^c,\vu^c)$, or $\lambda(\gamma,\vu)\to\lambda(\gamma^c,\vu^c) \tau_2^{-1}$. Let $\alpha = \alpha_0(\lambda_1 - \lambda_1^c)^p + o((\lambda_1 - \lambda_1^c)^p)$ and $\beta = \beta_0(\lambda_1 - \lambda_1^c)^q +o((\lambda_1 - \lambda_1^c)^q)$, where $q>p>0$. From \eqref{linear_combination}, $\vy$'s component $c_1$ and $c_k$ (for all $k>1$) is associated with $\vecv$ and vectors orthogonal to $\vecv$, respectively. Then, we have
$c_1 = r_1(\lambda_1 - \lambda_1^c)^p + o((\lambda_1 - \lambda_1^c)^p)$, and $c_k = r_k(\lambda_1 - \lambda_1^c)^q + o((\lambda_1 - \lambda_1^c)^q)$. Substituting these in \eqref{first} when $\lambda_1 \to \lambda_1^c$ shows that it is of the order, \\
$\sum\limits_{k=2}^{N} r_k\ones^T\vecv_k(\lambda_k - \tau_2^{-1})(\lambda_1 - \lambda_1^c)^q + r_1\ones^T\vecv_1(\lambda_1 - \lambda_1^c)^{p+1} \\
+ o\left((\lambda_1 -\lambda_1^c)^{\min(q,p+1)}\right)$, and 
$\sum\limits_{k=2}^{N}r_k^2 \lambda_k(\lambda_1 - \lambda_1^c)^{2q} + r_1^2\lambda_1(\lambda_1 - \lambda_1^c)^{2p} +  o\left((\lambda_1 - \lambda_1^c)^{2p}\right)$.

Similarly \eqref{second} gives,
\begin{equation*}
    \begin{aligned}[b]
        &r_1^2(\lambda_1 - \lambda_1^c)^{2p+1}
        + \sum\limits_{k=2}^{N}r_k^2 (\lambda_k - \tau_2^{-1})(\lambda_1 - \lambda_1^c)^{2q} \\
        +& o\left((\lambda_1 - \lambda_1^c)^{\min(2q,2p+1)}\right),
    \end{aligned}
\end{equation*}
\begin{equation*}
    \begin{aligned}[b]
        &r_1^3\lambda_1(\lambda_1 - \lambda_1^c)^{3p}\sum\limits_{i=1}^N \frac{v_{1i}}{(1-x_i)^2} 
        + o((\lambda_1 - \lambda_1^c)^{3p}) \\
        &+ O((\lambda_1 - \lambda_1^c)^{2p+q}).
    \end{aligned}
\end{equation*}

From these we have the following cases: \textit{Case A:} $q\geq p+1$, which means that $p+q=2p$ implying $p=1$; \textit{Case B:} $q<p+1$, implying $q=2p$; \textit{Case C:} $2q\geq 2p+1$, in which case $2p+1 = 3p$ and $2q = 2p+q$, implying $p=1$ and $q=2$; \textit{Case D:} $2q<2p+1$, which gives us $2q = 3p$ and $2p+1 = 2p+q$. Observe that Cases A and C are the only ones complementing each other, while the others pairs are contradictory. Hence, it's true that $p=1$ and $q=2$. By equating the corresponding powers allows us to rewrite \eqref{linear_combination} as
\begin{align*}
    \vy = \frac{\vecv(\gamma,\vu)}{\lambda_1(\gamma,\vu)\sum\limits_{i=1}^N \frac{v_i(\gamma,\vu)}{(1-x_i(\gamma,\vu))^2}}\left(\lambda_1(\gamma,\vu) - \lambda_1(\gamma^c,\vu^c)\right) \\
    + O\left((\lambda_1(\gamma,\vu) - \lambda_1(\gamma^c,\vu^c))^2\right).
\end{align*}
A Taylor series expansion with respect to $\lambda_1(\gamma,\vu)$ and centred around $\lambda_1(\gamma^c,\vu^c)=\tau_2^{-1}$, gives us\footnote{$\lambda_1(\gamma,\vu)$ corresponds to the first eigenvalue from PF Theorem and is equivalent to $\lambda(\gamma,\vu)$.}
\begin{align*}
    \begin{split}
        \vy & = \frac{\tau_2\vecv(\gamma^c,\vu^c)}{\sum\limits_{i=1}^N \frac{v_i(\gamma^c,\vu^c)}{(1-x^*_i)^2}}\left(\lambda_1(\gamma,\vu) - \tau_2^{-1}\right) \\ 
        & + O\left((\lambda_1(\gamma,\vu) - \tau_2^{-1})^2\right).
    \end{split}
\end{align*}
\end{proof}

Now, we use the results from Lemma \ref{lemma::avg_y-old} towards the proof of Theorem \ref{theorem::avg_y}, which is as follows. Note that $\lambda(\gamma,\vu) - 1/\tau_2$ on the RHS in Lemma \ref{lemma::avg_y-old} is actually based on given $(\gamma^c,\vu^c)$ at the critical point, i.e., $\vx=\vx^*, \vy=\boldsymbol{0}$.

\begin{proof}
From first-order approximation, we know that $f(x) \approx f(a) + f'(a) (x-a)$. Utilizing this, $\lambda(\gamma^c, \vu)$ with respect to change in $u_i$ based on given $\lambda(\gamma^c, \vu^c)$ at critical point, we have
\begin{equation}\label{first-order approx}
    \lambda(\gamma, \vu) \approx \lambda(\gamma^c, \vu^c) + \lambda'(\gamma^c, \vu^c) \Delta \vu.
\end{equation}
Utilizing Theorem 1 from \cite{greenbaum2020first}, we know that at $u_i = u_i^c$, $\lambda'(\gamma^c, \vu^c) = \vq^* \mM' \vp$ where $\lambda'(\gamma^c, \vu^c)$ and $\mM'$ are, respectively, the derivatives of $\lambda(\gamma, \vu)$ and $\mM$ at $u_i = u_i^c$. Also, $\mM = \mS_{\vx^*}(\mA + \gamma \vu\vu^T)$ with largest eigenvalue $\lambda(\gamma^c, \vu^c)$ corresponding to right eigenvector $\vp = \vecv$ and left eigenvector $\vq = \vecv ^T \mS_{\vx^*}^{-1}$. Substituting this in \eqref{first-order approx} and from Lemma \ref{lemma::initial_candidate}, we have
\begin{equation}\label{derivative}
    \lambda(\gamma^c, \vu) -  \frac{1}{\tau_2} = \vq^* \mM' \vp \Delta \vu,
\end{equation}
where $\Delta \vu = u_i - u_i^c$. The derivative of $\mM$ with respect to $u_i$ gives $\frac{\partial \mM}{\partial u_i} = \gamma^c \mS_{\vx^*}\big[e_i \vu^T + \vu e_i^T\big]$.

This gives us $\mM' = \gamma \mS_{\vx^*}\big[e_i \vu^T + \vu e_i^T\big]$ at $u_i$. Substituting this on the RHS of~\eqref{derivative} gives $\vq^* \mM' \vp \Delta \vu = 2\gamma^c \vu^T \vecv(u_i - u^c_i)v_i$, where the term $2 \gamma^c \vu^T$ always remains constant. Substituting this again in \eqref{derivative} gives
\begin{equation}\label{eq:lambdatilde}
    \lambda(\gamma^c, \vu) -  \frac{1}{\tau_2} = 2\gamma^c \vu^T \vecv(u_i - u^c_i)v_i,
\end{equation}
which implies that the change is in the direction of $\vecv =[v_i]$ for all $i \in \cN$.
\end{proof}
\section*{Broader Impact}
Our goal is to develop a heuristic approach through which a new/weaker product can survive (with a positive market share) along with an existing dominant product. This has wider managerial implications for brand managers running new product introduction campaigns. Specifically, such campaigns involve targeting niche communities through advertisements or promotional offers. For example, brand managers utilize social media platforms such as Instagram, where they perform brand advertisements through influencers/niche groups, while operating within a limited advertisement budget. Our model captures this exact dynamics by identifying an optimal set of users under a budget constraint to form a community who can be similarly targeted. 

We use real-world social network data of Facebook users and compare it against standard centrality measures. The comparison results indicate that it is not always necessary to spend all resources on more famous influencers/public figures for product promotion, but the community identified by our approach provides a much better market share. Again, this has wider implications and provides insights on how best to spend the limited advertisement budget. Finally, we assume that companies have the complete network information, which might not be always true. However, even with incomplete network information and through our approach, companies can identify niche set of users and ensure survival. Further, more information regarding the network will only result in improving the market share.

\section*{Ethical Considerations}
The data (graph structure) for simulation results is accessed from publicly available repositories. There is no other ethical issue associated with our work.

\section*{Acknowledgements}
This work was done when Shailaja Mallick and Vishwaraj Doshi were PhD students at North Carolina State University. This work was supported in part by National Science Foundation under Grant Nos. CNS-2007423 and IIS-1910749.

{\fontsize{9.8pt}{10.8pt} \selectfont
\bibliography{uv}}

\end{document}